\documentclass{eptcs}

\usepackage{amsmath}
\usepackage{amsfonts}
\usepackage{amssymb}
\usepackage{amsthm}
\usepackage{color}
\usepackage{graphicx}
\usepackage{hyperref}
\usepackage{ifthen}
\usepackage{stmaryrd}
\usepackage{subfig}
\usepackage{tikz-reo}
\usepackage{xspace}

\newboolean{isTechReport}
\setboolean{isTechReport}{false}

%

\newcommand{\FIFO}[2][1]		{\textmd{\textsf{FIFO{\small #1}}}\ifthenelse{\equal{#2}{}}{}{$\langle #2 \rangle$}}
\newcommand{\LossyFIFO}[2][1]	{\textmd{\textsf{LossyFIFO{\small #1}}}\ifthenelse{\equal{#2}{}}{}{$\langle #2 \rangle$}}
\newcommand{\Alternator}[2][1]	{\textmd{\textsf{Alternator}}\ifthenelse{\equal{#2}{}}{}{$\langle #2 \rangle$}}
\newcommand{\SyncFIFOSync}[2][1]{\textmd{\textsf{SyncFIFO{\small #1}Sync}}\ifthenelse{\equal{#2}{}}{}{$\langle #2 \rangle$}}

\newcommand{\Fifo}[2][1]		{\textmd{\textsf{fifo{\small #1}}}\ifthenelse{\equal{#2}{}}{}{$\langle #2 \rangle$}}
\newcommand{\LossySync}[2][]	{\textmd{\textsf{lossysync}}\ifthenelse{\equal{#2}{}}{}{$\langle #2 \rangle$}}
\newcommand{\Sync}[2][]			{\textmd{\textsf{sync}}\ifthenelse{\equal{#2}{}}{}{$\langle #2 \rangle$}}
\newcommand{\SyncDrain}[2][]	{\textmd{\textsf{syncdrain}}\ifthenelse{\equal{#2}{}}{}{$\langle #2 \rangle$}}
\newcommand{\SyncSpout}[2][]	{\textmd{\textsf{syncspout}}\ifthenelse{\equal{#2}{}}{}{$\langle #2 \rangle$}}

\newcommand{\procfifo}[2][1]	{\mathit{Fifo\mbox{\small\textit{#1}}}\langle\mathit{#2}\rangle}
\newcommand{\proclossysync}[1]	{\mathit{LossySync}\langle\mathit{#1}\rangle}
\newcommand{\procsync}[1]		{\mathit{Sync}\langle\mathit{#1}\rangle}
\newcommand{\procsyncdrain}[1]	{\mathit{SyncDrain}\langle\mathit{#1}\rangle}

\newcommand{\procmerger}[1]		{\mathit{Merger}\langle\mathit{#1}\rangle}
\newcommand{\procreplicator}[1]	{\mathit{Replicator}\langle\mathit{#1}\rangle}
\newcommand{\procpumpstat}[1]	{\mathit{PumpingStation}\langle\mathit{#1}\rangle}
\newcommand{\procboundary}[1]	{\mathit{Boundary}\langle\mathit{#1}\rangle}

\newcommand{\reonode}			{n}
\newcommand{\reodata}			{d}

%

\newcommand{\actuniv}{\mathbb{A}\mathrm{ct}}
\newcommand{\A}			{a^{\silent}}
\newcommand{\B}			{b^{\silent}}
\newcommand{\C}			{c^{\silent}}
\newcommand{\ALPHA}{\alpha^{\delta}}
\newcommand{\BETA}{\beta^{\dead}}
\newcommand{\GAMMA}{\gamma^{\dead}}
\newcommand{\co}[1]{\overline{#1}}
\newcommand{\false}{\mathit{false}}
\newcommand{\mact}{\sqcup}
\newcommand{\mactalph}{{\ltimes}}
\newcommand{\mactin}{\sqsubseteq}
\newcommand{\mactindep}{\smallsmile}
\newcommand{\mactminus}{\setminus}
\newcommand{\mactpowset}{{\Downarrow}}
\newcommand{\mactquant}{\bigsqcup}
\newcommand{\mactsize}{\mathsf{sz}}
\newcommand{\mactuniv}{\mathbb{MA}\mathrm{ct}}
\newcommand{\silent}{\tau}
\newcommand{\silentact}{\mathtt{tau}}
\newcommand{\true}{\mathit{true}}

%

\newcommand{\dead}			{\delta}
\newcommand{\ch}			{+}
\newcommand{\seq}			{\cdot}
\newcommand{\parr}			{\mathop{\|}}
\newcommand{\lmerge}		{\mathop{\llfloor}}
\newcommand{\sync}			{\mid}
\newcommand{\restr}[1][V]	{\nabla_{#1}}
\newcommand{\block}[1][B]	{\partial_{#1}}
\newcommand{\rename}[1][R]	{\rho_{#1}}
\newcommand{\comm}[1][C]	{\Gamma_{#1}}
\newcommand{\hide}[1][I]	{\mathcal{T}_{#1}}

\newcommand{\commfun}[1][C]	{\mathcal{C}_{#1}}
\newcommand{\commdom}		{\mathsf{dom}}

\newcommand{\seqprocuniv}{\mathbb{S}\mathrm{eq}}

\newcommand{\acts}{\mathsf{Acts}}
\newcommand{\silentfree}{\silent\mbox{-}\mathsf{free}}

\newcommand{\proceq}{\simeq}
\newcommand{\xproceq}[1]{\stackrel{\mbox{\tiny #1}}{\proceq}}

%

\newcommand{\substfun}[1][w]		{\xi_{#1}}
\newcommand{\cosubstfun}[1][w]		{\co{\substfun[]}_{#1}}
\newcommand{\one}					{\mathtt{1}}
\newcommand{\two}					{\mathtt{2}}
\newcommand{\substenv}				{\Xi}
\newcommand{\substenvdom}			{\mathrm{dom}}
\newcommand{\substenvimg}			{\mathrm{img}}
\newcommand{\substenvcomm}			{\mathrm{comm}}

\newcommand{\hatisol}[1][\substenv]	{\widehat{\isol[]}_{#1}}
\newcommand{\isol}[1][\substenv]	{\mathsf{isol}_{#1}}
\newcommand{\coisol}[1][\substenv]	{\co{\isol[]}_{#1}}

\newcommand{\bisect}[1][\substenv]		{\mathsf{split}_{#1}}
\newcommand{\refbisect}[1][\substenv]	{\mathsf{SPLIT}_{#1}}

\newcommand{\qmark}[1][\substenv]	{\mathop{?}_{#1}}

%

\newcommand{\eqproof}[2][]{
	\begin{flushleft}
		\fcolorbox{white}{white}{
			$\begin{array}[t]{@{} c @{} l @{}}
				\eqeqbox{#1} & #2
			\end{array}$
		}
	\end{flushleft}
}

\newcommand{\eqprooff}[3]{
	\begin{flushleft}
		\fcolorbox{white}{white}{
			$#1\begin{array}[t]{@{} c @{} l @{}}
				\eqeqbox{#2} & #3
			\end{array}$
		}
	\end{flushleft}
}

\newcommand{\eqeqbox}[2][1.3cm]{\hbox to #1 {\hfil\ensuremath{#2}\hfil}}

\newcommand{\propref}[3][Prop.]{#1~\ref{prop:#2}\ifthenelse{\equal{#3}{}}{}{:\ref{prop:#2:#3}}}
\newcommand{\lemmaref}[3][Lem.]{#1~\ref{lemma:#2}\ifthenelse{\equal{#3}{}}{}{:\ref{lemma:#2:#3}}}

\newcommand{\xeq}[1]{\stackrel{\mbox{\tiny #1}}{=}}

\newcommand{\AND}{\mbox{ \textbf{\textup{and}} }}

\newcommand{\IMPLIES}{\mbox{ \textbf{\textup{implies}} }}
\newcommand{\OR}{\mbox{ \textbf{\textup{or}} }}

\newcommand{\binop}{\oplus}
\newcommand{\fun}{\dagger}

\newcommand{\set}[1]{\ensuremath{\{#1\}}}
\newcommand{\setbuild}[2]{\set{#1 \mid #2}}

\newcommand{\emptystring}{\epsilon}

\theoremstyle{plain}
\newtheorem{definition}{Definition}
\newtheorem{proposition}{Proposition}
\newtheorem{lemma}{Lemma}
\newtheorem{theorem}{Theorem}
\newtheorem{corollary}{Corollary}

\title{A Procedure for Splitting Processes \\ and its Application to Coordination \ifthenelse{\boolean{isTechReport}}{\\(Technical Report)}{}\footnotetext{%
	This research is partly funded by the EU project FP7-231620 HATS: Highly Adaptable and Trustworthy Software using Formal Models (\url{http://www.hats-project.eu/})
}}

\author{%
Sung-Shik T.Q. Jongmans
\institute{%
	Centrum Wiskunde \& Informatica
\\	Amsterdam, the Netherlands}
\email{jongmans@cwi.nl}
\and
Dave Clarke
\institute{%
	IBBT-DistriNet
\\	Department of Computer Science
\\	Katholieke Universiteit Leuven
\\	Leuven, Belgium}
\email{dave.clarke@cs.kuleuven.be}
\and
Jos\'{e} Proen\c{c}a
\institute{%
	IBBT-DistriNet
\\	Department of Computer Science
\\	Katholieke Universiteit Leuven
\\	Leuven, Belgium}
\email{jose.proenca@cs.kuleuven.be}
}

\ifthenelse{\boolean{isTechReport}}{}{}

\begin{document}
\maketitle

\begin{abstract}
	We present a procedure for splitting processes in a process algebra with multi-actions (a subset of the specification language mCRL2). This splitting procedure cuts a process into two processes along a set of actions A: roughly, one of these processes contains no actions from A, while the other process contains only actions from A. We state and prove a theorem asserting that the parallel composition of these two processes equals the original process under appropriate synchronization.

	We apply our splitting procedure to the process algebraic semantics of the coordination language Reo: using this procedure and its related theorem, we formally establish the soundness of splitting Reo connectors along the boundaries of their (a)synchronous regions in implementations of Reo. Such splitting can significantly improve the performance of connectors.
\end{abstract}

\raggedbottom
\interfootnotelinepenalty=10000

%
\section{Introduction}
\label{sect:intr}

\begin{figure}[t]
	\newcommand{\HEIGHT}	{57pt}
	\newcommand{\SCALE}		{1}
	
	\hfil
	\subfloat[{\FIFO[2]{}}]{\label{fig:conn:fifo2}
		\vbox to \HEIGHT {%
			\vfil
			\hbox {
				\scalebox{\SCALE}{ \begin{tikzpicture}[baseline, node distance=1.5cm]
	\node[reobnode,label=left:\small $a$]	(A) [] {};
	\node[reonode,label=right:\small $x$]	(X) [right of=A] {};
	\node[reobnode,label=right:\small $b$]	(B) [below of=X] {};
	
	\draw[fifo]	(A) to node {} (X);
	\draw[fifo]	(X) to node {} (B);
\end{tikzpicture}}
			}
			\vfil
		}
	}
	\hfil
	\subfloat[{\LossyFIFO[]{}}]{\label{fig:conn:lossyfifo}
		\vbox to \HEIGHT {%
			\vfil
			\hbox {
				\scalebox{\SCALE}{ \begin{tikzpicture}[baseline, node distance=1.5cm]
	\node[reobnode,label=left:\small $a$]	(A) [] {};
	\node[reonode,label=right:\small $x$]	(X) [right of=A] {};
	\node[reobnode,label=right:\small $b$]	(B) [below of=X] {};
	
	\draw[lossysync]	(A) to node {} (X);
	\draw[fifo]			(X) to node {} (B);
\end{tikzpicture}}
			}
			\vfil
		}
	}
	\hfil
	\subfloat[\Alternator{}]{\label{fig:conn:alternator}
		\vbox to \HEIGHT {%
			\vfil
			\hbox {
				\scalebox{\SCALE}{\begin{tikzpicture}[baseline, node distance=1.5cm]
	\node[reobnode,label=right:\small $c$]	(C) [] {};
	\node[reobnode,label=left:\small $a$]	(A) [left of=C] {};
	\node[reobnode,label=left:\small $b$]	(B) [below of=A] {};
	
	\draw[sync]			(A) to node {} (C);
	\draw[fifo]			(B) to node {} (C);
	\draw[syncdrain]	(A) to node {} (B);
\end{tikzpicture}}
			}
			\vfil
		}
	}
	\hfil
	\subfloat[{\SyncFIFOSync[]{}}]{\label{fig:conn:syncfifosync}
		\vbox to \HEIGHT {%
			\vfil
			\hbox {
				\scalebox{\SCALE}{\begin{tikzpicture}[baseline, node distance=1.5cm]
	\node[reobnode,label=right:\small $b$]	(B) [] {};
	\node[reobnode,label=left:\small $a$]	(A) [left of=B] {};
	\node[reonode,label=left:\small $x$]	(X) [below of=A] {};
	\node[reonode,label=right:\small $y$]	(Y) [right of=X] {};
	
	\draw[sync]	(A) to node {} (X);
	\draw[fifo]	(X) to node {} (Y);
	\draw[sync]	(Y) to node {} (B);
\end{tikzpicture}}
			}
			\vfil
		}
	}
	\hfil

	\caption{Some example connectors.}
	\label{fig:conn}
\end{figure}
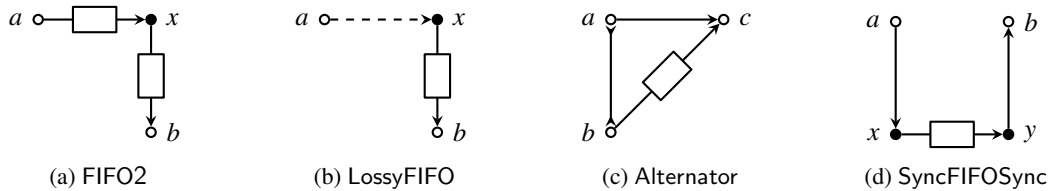

Over the past decades, coordination languages have emerged for the specification and implementation of interaction protocols among entities running concurrently (components, services, threads, etc.). This class of languages includes Reo~\cite{Arb04}, a graphical language for compositional construction of \emph{connectors}: communication mediums through which entities can interact with each other. Figure~\ref{fig:conn} shows some example connectors in their usual graphical syntax. Intuitively, connectors consist of one or more \emph{channels}, through which data items flow, and two or more \emph{nodes}, on which channel ends coincide. Through channel composition---the act of gluing channels together on nodes---engineers can construct complex connectors. Channels often used include the reliable synchronous channel, called \Sync{}, and the reliable asynchronous channel \Fifo[$n$]{}, which has a buffer of capacity $n$. Importantly, while nodes have a fixed semantics, Reo features an open-ended set of channels. This allows engineers to define their own channels with custom semantics.

To use connectors in real applications, one must derive executable code from graphical specifications of connectors (e.g., those in Figure~\ref{fig:conn}). Roughly two implementation approaches exist. In the \emph{distributed approach}, one implements the behavior of each of the $k$ constituents of a connector and runs these $k$ implementations concurrently as a distributed system; in the \emph{centralized approach}, one computes the behavior of a connector as a whole, implements this behavior, and runs this implementation sequentially as a centralized system. Neither of these two approaches unconditionally predominates the other: among other factors of influence, the hardware architecture on which to deploy the application plays an important role. For example, in the case of a service-oriented application, the distributed approach seems natural, because the services involved run on different machines and the network between them may play a role in their coordination. However, if coordination involves threads running on the same machine in some multi-threading application, the centralized approach appears more appropriate, as it avoids communication among the constituents of a connector at runtime: in this scenario, one dedicates one thread to running the connector.

One optimization technique applicable to both of these approaches involves the identification of the \emph{synchronous} and the \emph{asynchronous regions} of a connector. A synchronous region contains exactly those nodes and channels of a connector that synchronize collectively to decide on their individual behavior; an asynchronous region connects synchronous regions in an asynchronous way. For instance, the connector consisting of a \Sync{} channel, a \Fifo{} channel, and another \Sync{} channel (see Figure~\ref{fig:conn:syncfifosync}) has two synchronous regions, connected by an asynchronous region. Intuitively, two synchronous regions can run completely indepedently of each other;%
\footnote{%
	To see this, suppose that two synchronous regions cannot run completely independently of each other. In that case, there exist at least one constituent of the one region that synchronizes with at least one constituent of the other region. But then, these two constituents belong to the same synchronous region---a contradiction.
}
an asynchronous regions connecting them takes care of transporting data from one synchronous region to the other. In the distributed approach, this means that nodes and channels need to share information only with those nodes and channels in the same synchronous region---not with every node or channel in the connector. In the centralized approach, this means that one does not need to compute the behavior of a connector as a whole, but rather on a per-region basis.

Recent work shows that the optimization based on identifying regions can significantly improve performance~\cite{CP12,PCdVA12,Pro11}. However, while intuitively valid, a formal argument establishing the soundness of this optimization does not exist yet. In this paper, we present such a proof, based on the process algebraic semantics of Reo~\cite{KKdV??,KKdV10a,KKdV10b,KKdV10c}. In this semantics, one associates every connector with a process term describing its behavior. More concretely, we identify the following contributions:
\begin{itemize}
	\item
	We introduce a splitting procedure for a subset of the specification language mCRL2~\cite{GMR+08,GM10}---the basis of the existing process algebraic semantics of Reo---and prove its soundness.
	
	\item
	We formalize the notion of (a)synchronous regions in the process algebraic semantics of Reo.
	
	\item
	We apply this splitting procedure to the process algebraic semantics of Reo, thereby justifying the (a)synchronous regions optimization for Reo implementations. In particular, we discuss how we can implement and use the splitting procedure in the distributed approach, exploiting the local concurrency available on the computational nodes.
	
	\item
	We lay the foundations for the definition and analysis of new splitting operations for Reo.
\end{itemize}

This paper is organized as follows. In Section~\ref{sect:mcrl2}, we give an overview of the fragment of mCRL2 that we use. In Section~\ref{sect:reo}, we summarize the process algebraic semantics of Reo. In Section~\ref{sect:split}, we introduce our splitting procedure, and in Section~\ref{sect:appl}, we apply this procedure to connectors. We conclude this paper with future work in Section~\ref{sect:conc}. \ifthenelse{\boolean{isTechReport}}{}{See \cite{JCP12} for a version of this paper with an appendix including full proofs for all the intermediary lemmas.}

%
\section{A Process Algebra with Multi-Actions}
\label{sect:mcrl2}

The process algebra considered in this work comprises the data-free and untimed fragment of mCRL2, a specification language based on ACP \cite{BK84} and the basis of the existing process algebraic semantics of Reo. Among other useful constructs, mCRL2 has one feature that makes it particularly well-suited as a semantic formalism for Reo, namely \emph{multi-actions}: collections of actions that occur at the same time. We postpone an explanation of how to use multi-actions for describing the behavior of connectors until Section~\ref{sect:reo}. In this section, we summarize (our subset of) mCRL2.

\begin{figure}[t]
	\newcommand{\HEIGHT}{55pt}
	\newcommand{\separator}{\enspace|\enspace}
	
	\hfil
	\subfloat[Syntax of multi-actions.]{\label{fig:syntax:mact}
		\fbox{
			\vbox to \HEIGHT {%
				\vfil
 				\hbox {
					$\begin{array}{@{} l @{\enspace} c @{\enspace} l @{}}
						a				& ::=	& \mbox{any element from } \actuniv
					\\	\A				& ::=	& a \separator \silent
					\\	\alpha , \beta	& ::=	& \A \separator \alpha \mact \beta
					\end{array}$
				}
				\vfil
			}
		}
	}
	\hfil
	\subfloat[Syntax of processes.]{\label{fig:syntax:proc}
		\fbox{
			\vbox to \HEIGHT {%
				\vfil
				\hbox {
					$\begin{array}{@{} l @{\enspace} r @{\enspace} l @{}}
						\ALPHA	& ::=	& \alpha \separator \dead 
					\\	p		& ::=	& \ALPHA \separator P \separator p \ch q \separator p \seq q
					\\			& |		& p \parr q \separator p \lmerge q \separator p \sync q 
					\\			& |		& \restr(p) \separator \block(p) \separator \rename(p) \separator \comm(p) \separator \hide(p)
					\end{array}$
				}
				\vfil
			}
		}
	}
	\hfil
	
	\caption{Syntax.}
	\label{fig:syntax}
\end{figure}

Figure~\ref{fig:syntax:mact} shows the syntax of multi-actions. Let $\actuniv$ denote the set of actions, ranged over by the symbols $a$, $b$, $c$, etc. The distinguished symbol $\silent$ denotes the empty multi-action, i.e., the multi-action consisting of no observable actions. Let the symbols $\A$, $\B$, $\C$, etc., range over the elements in the set $\actuniv \cup \set{\silent}$. The operator~$\mact$ (commutative and associative) combines multi-actions to form larger multi-actions; let $\mactuniv$ denote the set of all multi-actions, ranged over by $\alpha$, $\beta$, $\gamma$, etc. Processes, ranged over by $p$, $q$, $r$, etc., combine multi-actions using the operators shown in Figure~\ref{fig:syntax:proc}.
\begin{description}
	\item[Basic operators]
	The distinguished symbol---or nullary operator---$\dead$ denotes the deadlock process, i.e., the process performing no multi-actions. Let the symbols $\ALPHA$, $\BETA$, $\GAMMA$, etc., range over the processes in the set $\mactuniv \cup \set{\dead}$. The operators~$\ch$ and~$\seq$ combine processes alternatively and sequentially in the usual way.%
	\footnote{%
		We skip the basic operators for conditional composition and summation, because they have no meaning in the data-free fragment of mCRL2 considered. Similarly, we skip those operators that have no meaning in the untimed fragment of mCRL2.
	}
	Let $\seqprocuniv$ denote the set of sequential processes, which consist only of basic operators and multi-actions. Finally, let $P$, $Q$, $R$, etc., denote references that refer to process definitions of the form $P \mapsto p$, $Q \mapsto q$, $R \mapsto r$, etc. For technical convenience, we currently disallow mutual recursion: 
if $P \mapsto p$, then only $P$ can occur as a reference in $p$.
	
	\item[Parallel operators]
	The operator~$\parr$ interleaves and synchronizes processes. The operator $\lmerge$ serves as an auxiliary operator in the axiomatization of~$\parr$: it makes the process on its left-hand side perform a multi-action, and afterwards, it combines the remaining process with the process on its right-hand side the same way $\parr$ does. The operator~$\sync$ synchronizes processes on the first multi-actions they perform, and it combines the remaining processes the same way $\parr$ does.
	
	\item[Additional operators]
	Four additional operators constrain the behavior of processes composed in parallel. The operator~$\restr[]$ restricts a process $p$ to the multi-actions in a set of nonempty multi-actions $V \subseteq \mactuniv \setminus \set{\silent}$ (modulo commutativity and associativity of $\mact$). The operator~$\block[]$ blocks those actions in a process $p$ that occur also in a set of actions $B \subseteq \actuniv$. The operator~$\rename[]$ renames the actions in a process $p$ according to a set of renaming rules $R \subseteq \actuniv \times \actuniv$. Finally, the operator~$\comm[]$ applies the communications in a set $C \subseteq \mactuniv \times \actuniv$ to a process $p$. We write communication rules as $\alpha \rightarrow a$ and require that $\silent$ does not occur in $\alpha$.
	
	\item[Abstraction operator]
	The operator~$\hide[]$ hides those actions in a process $p$ that occur also in a set of actions $I \subseteq \actuniv$. The act of hiding an action $a$, which means ``replacing $a$ by $\silent$,'' differs from the act of blocking $a$, which means ``replacing $a$ by $\dead$.''
\end{description}
\noindent We adopt the following usual operator precedence (in decreasing order): $\mact , \sync , \seq , \parr , \lmerge , \ch$. We write as few parentheses as possible, omitting them also in the case of associative or commutative operators. For example, we write $a \seq b \seq c \ch d \ch e$ instead of $(a \seq (b \seq c)) \ch (d \ch e)$.

See Section~\ref{sect:axioms} for an axiomatization of the operators discussed above.

%
\section{Reo and its Process Algebraic Semantics}
\label{sect:reo}

Before we continue with our splitting procedure in Section~\ref{sect:split}, we briefly discuss Reo and its process algebraic semantics \cite{KKdV??,KKdV10a,KKdV10b,KKdV10c}; this helps in relating the abstract discussion in Section~\ref{sect:split} to a concrete case. Recall from Section~\ref{sect:intr} that connectors consist of channels and nodes. Below, we outline how these channels and nodes behave and how to describe such behavior as procesess.

\paragraph{Channels.}

\begin{figure}
	\centering
	\newcommand{\RAISE}{.5ex}
	\newcommand{\WIDTH}{351pt}
	\fbox{
		\begin{minipage}{.95\textwidth}
		\scalebox{.82}{
			\begin{tabular}{@{} c @{$\quad$} l @{$\quad$} l @{}}
				\emph{Graphical syntax}									& \emph{Textual syntax}	& \emph{Semantics}
			\\	\raisebox{\RAISE}{\begin{tikzpicture}[baseline, node distance=2cm]
	\node[invis,label=above right:$a$]		(I) [] {};
	\node[invis,label=above left:$b$]		(O) [right of=I] {};

	\draw[sync]	(I) to node {} (O);
\end{tikzpicture}}		& \Sync{a;b}			& \begin{minipage}[t]{\WIDTH}
																									\raggedright
																									Atomically accepts an item on its source end $a$ and dispenses it on its sink end $b$.
																								\end{minipage}
			\\	\raisebox{\RAISE}{\begin{tikzpicture}[baseline, node distance=2cm]
	\node[invis,label=above right:$a$]		(I) [] {};
	\node[invis,label=above left:$b$]		(O) [right of=I] {};

	\draw[lossysync]	(I) to node {} (O);
\end{tikzpicture}}	& \LossySync{a;b}		& \begin{minipage}[t]{\WIDTH}
																									\raggedright
																									Atomically accepts an item on its source end $a$ and, non-deterministically, either dispenses it on its sink end $b$ or loses it.
																								\end{minipage}
			\\	\raisebox{\RAISE}{\begin{tikzpicture}[baseline, node distance=2cm]
	\node[invis,label=above right:$a$]		(I) [] {};
	\node[invis,label=above left:$b$]		(O) [right of=I] {};

	\draw[syncdrain]	(I) to node {} (O);
\end{tikzpicture}}	& \SyncDrain{a,b;}		& \begin{minipage}[t]{\WIDTH}
																									\raggedright
																									Atomically accepts (and loses) items on both of its source ends $a$ and $b$.
																								\end{minipage}
			\\	\raisebox{\RAISE}{\begin{tikzpicture}[baseline, node distance=2cm]
	\node[invis,label=above right:$a$]		(I) [] {};
	\node[invis,label=above left:$b$]		(O) [right of=I] {};

	\draw[fifo]			(I) to node {} (O);
\end{tikzpicture}}		& \Fifo{a;b}			& \begin{minipage}[t]{\WIDTH}
																									\raggedright
																									Atomically accepts an item on its source end and stores it in its buffer, and atomically dispenses the item $\reodata$ on its sink end and clears its buffer.
																								\end{minipage}
			\end{tabular}
		}
		\end{minipage}
	}
	
	\caption{Syntax and semantics of common channels.}
	\label{fig:channels}
\end{figure}

Every channel has exactly two ends, each of which has one of two types: \emph{source ends} accept data, while \emph{sink ends} dispense data. Besides this assumption on their number of ends, Reo makes no assumptions about channels. This means, for example, that Reo allows channels with two source ends. Figure~\ref{fig:channels} shows the graphical syntax of four common channels, a textual syntax, and an informal description of their behavior. In the process algebraic semantics of Reo, one associates every channel end with an action. For source ends, such an action represents the acceptance of data; for sink ends, it represents the dispersal of data. By combining these actions in multi-actions, one can describe channels that atomically accept and dispense data on their ends. For example, the following \emph{recursive} process definitions describe the behavior of the channels in Figure~\ref{fig:channels}.
\begin{center}
	$\begin{array}{@{} l @{\;} c @{\;} l @{\quad \quad \quad} l @{\;} c @{\;} l @{}}
		\procsync{a;b}			& \mapsto	& a \mact b \seq \procsync{a;b}					& \procsyncdrain{a,b;}	& \mapsto	& a \mact b \seq \procsyncdrain{a,b;}
	\\	\proclossysync{a;b}		& \mapsto	& (a \mact b \ch a) \seq \proclossysync{a;b}	& \procfifo{a;b}		& \mapsto	& a \seq b \seq \procfifo{a;b}
	\end{array}$
\end{center}
The definition $\procsync{a;b}$ models synchronous flow through channel ends $a$ and $b$, represented by the multi-action $a \mact b$. The definition $\proclossysync{a;b}$ models a (nondeterministic) choice between flow through ends $a$ and $b$ and flow through only $a$, represented by the multi-action $a \mact b \ch a$. The definition $\procfifo{a;b}$ models flow through $a$ followed by flow through $b$. The recursion found in each of the four process definitions above indicates that the channels modeled by them repeat their behavior indefinitely.

In this paper, we adopt the context-\emph{in}sensitive process algebraic semantics of Reo, originally based on \emph{constraint automata}~\cite{BSAR06}. In context-insensitive semantic formalisms, one cannot directly describe channels and connectors whose behavior depends not only on their internal state but also on the presence or absence of I/O operations---their \emph{context}. In contrast, one can describe such channels and connectors in semantic formalisms that do support context-sensitivity. For instance, a context-sensitive version of \LossySync{} should lose a data item only in the absence of I/O operations on its sink end. A context-sensitive process algebraic semantics of Reo exists, originally based on \emph{connector coloring} with three colors~\cite{CCA07}. However, because this semantics depends on the data component of mCRL2, we do not consider it in this paper. We remark that we could \emph{encode} a context-sensitive process algebraic semantics along the lines of~\cite{JKA11}, which makes our splitting procedure applicable also to context-sensitive channels and connectors. For simplicity, however, we do not pursue that in this paper. See \cite{JA12} for an extensive overview of context-insensitive and context-sensitive semantic formalisms for Reo.

\paragraph{Nodes}

Entities communicating through a connector perform I/O operations---writes and takes---on its nodes. Reo features three kinds of nodes: \emph{source nodes} on which only source ends coincide, \emph{sink nodes} on which only sink ends coincide, and \emph{mixed nodes} on which both kinds of channel end coincide. Nodes have the following semantics.
\begin{itemize}
	\item 
	A source node $n$ has \emph{replicator semantics}. Once an entity attempts to write a data item $\reodata$ on $\reonode$, this node first suspends this operation. Subsequently, $\reonode$ notifies the channels whose source ends coincide on $\reonode$ that it offers $\reodata$. Once each of these channels has notified $\reonode$ that it accepts $\reodata$, $\reonode$ resolves the write: atomically, $\reonode$ dispenses $\reodata$ to each of its coincident source ends.
	
	\item 
	A sink node $\reonode$ has \emph{nondeterministic merger semantics}. Once an entity attempts to take a data item from $\reonode$, this node first suspends this operation. Subsequently, $\reonode$ notifies the channels whose sink ends coincide on $\reonode$ that it accepts a data item. Once at least one of these channels has notified $\reonode$ that it offers a data item, $\reonode$ resolves the take: atomically, $\reonode$ fetches this data item from the appropriate channel end and dispenses it to the entity attempting to take. If multiple sink ends offer a data item, $\reonode$ chooses one of them nondeterministically.
	
	\item
	A mixed node $\reonode$ has \emph{pumping station semantics}: a combination of the replicator semantics and merger semantics discussed above, where fetching and dispensing occurs atomically.
\end{itemize}

In the process algebraic semantics of Reo, one associates each of the $m$ source ends of a node with an action $src_{1 \leq i \leq m}$ and each of its $n$ sink ends with an action $snk_{1\leq i \leq n}$. Then, one can describe nodes by combining the processes for a binary replicator (one sink end to two source ends), a binary merger (two sink ends to one source end), a one-to-one pumping station, and a process for boundary nodes:
\begin{center}
	$\begin{array}{@{} l @{\;} c @{\;} l @{}}
		\procreplicator{snk ; src_1 , src_2}	& \mapsto	& snk \mact src_1 \mact src_2 \seq \procreplicator{snk ; src_1 , src_2}
	\\	\procmerger{snk_1 , snk_2 ; src}		& \mapsto	& (snk_1 \mact src \ch snk_2 \mact src) \seq \procmerger{snk_1 , snk_2 ; src}
	\\	\procpumpstat{snk ; src}				& \mapsto	& snk \mact src \seq \procpumpstat{snk ; src}
	\\	\procboundary{bnd}						& \mapsto	& bnd \seq \procboundary{bnd}
	\end{array}$
\end{center}

\paragraph{Connectors.}

To get the behavior of a connector as a process, one composes the processes of the constituents of that connector in parallel and synchronizes their actions appropriately. Below, we give the processes of the connectors in Figures~\ref{fig:conn:fifo2} and~\ref{fig:conn:alternator}. See~\cite{KKdV??,KKdV10a,KKdV10b,KKdV10c} for more examples.

\begin{center}
	$\begin{array}{@{} l @{\;} c @{\;} l @{}}
		\mathit{Fig\ref{fig:conn:fifo2}}		& =	& \block[\set{a_1,\co{a}_1,x_1,\co{x}_1,x_2,\co{x}_2,b_1,\co{b}_1}](\comm[\set{a_1 \mact \co{a}_1 \rightarrow a , x_1 \mact \co{x}_1 \rightarrow x , x_2 \mact \co{x}_2 \rightarrow x , b_1 \mact \co{b}_1 \rightarrow b}](
	\\											&	& \quad \procboundary{\co{a}_1} \parr \procfifo{a_1;x_1} \parr \procpumpstat{\co{x}_1;\co{x}_2} \parr \procfifo{x_2;b_1} \parr \procboundary{\co{b}_1}))
	\\	\vspace{-2ex}
	\\	\mathit{Fig\ref{fig:conn:alternator}}	& =	& \block[\setbuild{\ast_{bnd},\co{\ast}_{bnd},\ast_i,\co{\ast}_i}{\ast \in \set{a,b,c} \wedge i \in \set{1,2}}](\comm[\setbuild{\ast_{bnd} \mact \co{\ast}_{bnd} \mact \ast_i \mact \co{\ast}_i \rightarrow \ast}{\ast \in \set{a,b,c} \wedge i \in \set{1,2}}](
	\\											&	& \quad \procboundary{a_{bnd}} \parr \procreplicator{\co{a}_{bnd} ; \co{a}_1,\co{a}_2} \parr \procboundary{b_{bnd}} \parr \procreplicator{\co{b}_{bnd} ; \co{b}_1,\co{b}_2} \parr {}
	\\											&	& \quad \procsyncdrain{a_2;b_2} \parr \procsync{a_1;c_1} \parr \procfifo{b_1;c_2} \parr \procmerger{\co{c}_1,\co{c}_2 ; \co{c}_{bnd}} \parr \procboundary{c_{bnd}}))
	\end{array}$
\end{center}

%
\section{Splitting Processes}
\label{sect:split}

Recall from Section~\ref{sect:intr} that we aim at establishing the validity of optimizing implementations of Reo through the identification of (a)synchronous regions. Essentially, we want to show that \emph{splitting} connectors along the boundaries of their (a)synchronous regions (and running the resulting subconnectors concurrently) does not give rise to inadmissible behavior. In this section, we lay the foundation for this kind of splitting in terms of a splitting procedure for processes. Later, in Section~\ref{sect:appl}, we apply this procedure to the process algebraic semantics of Reo, thereby justifying the splitting of connectors. Here, we start by explaining the intuition behind our splitting procedure; formal definitions appear in Section~\ref{sect:split:defs}, followed by theorems and proofs in Section~\ref{sect:split:thm}. We note that our notion of ``splitting'' differs from ``decomposition'' in the spirit of \cite{MM93}: in our context, primality or uniqueness do not matter.

Let $\acts(p)$ denote the set of actions occurring in a process $p$. We introduce the function $\bisect[]$, which splits a process $p$ along a set of actions $A$ into two processes: one of these processes contains no actions in $\acts(p) \setminus A$, while the other process contains no actions in $A$. We call the former process the \emph{$A$-isolation of $p$} and the latter process the \emph{$A$-coisolation of $p$}. We aim at constructing $p$'s isolation and its coisolation such that their parallel composition equals $p$ under appropriate synchronization. Informally, to construct $p$'s $A$-isolation, replace every action in $p$ as follows:
\begin{itemize}
	\item 
	If $a \in A$, replace $a$ with the multi-action $a \mact \substfun[](a)$, where $\substfun[](a)$ denotes a fresh action with respect to $\acts(p)$. Intuitively, $\substfun[](a)$ represents the act of ``disseminating that this process performs $a$.''
	
	\item 
	If $b \notin A$, replace $b$ with the action $\cosubstfun[](b)$, where $\cosubstfun[](b)$ denotes a fresh action with respect to $\acts(p)$. Intuitively, $\cosubstfun[](b)$ represents the act of ``discovering that another process performs $b$.''
\end{itemize}
Symmetrically, to construct the $A$-coisolation of a process $p$, replace in $p$ every $b \in A$ with $\cosubstfun[](b)$ and every $b \notin A$ with $b \mact \substfun[](b)$. Note that because the foregoing affects only multi-actions, $p$'s isolation and its coisolation have the same structure as $p$. In other words: the process $p$, its isolation, and its coisolation have the same transition system modulo transition labels.

To illustrate isolation and coisolation, consider the process $q = a \seq b$ as a running example. This process has $q_1 = a \mact \substfun[](a) \seq \cosubstfun[](b)$ as its $\set{a}$-isolation and $q_2 = \cosubstfun[](a) \seq b \mact \substfun[](b)$ as its $\set{a}$-coisolation. 
However, the parallel composition of $q_1$ and $q_2$ is not equal to $q$ yet: to ensure that a process equals the parallel composition of its isolation and its coisolation, these latter two processes should synchronize on $\substfun[](a)$ and $\cosubstfun[](a)$ for each $a$. To this end, we apply the communication operator $\comm[]$ to such compositions. In our running example, this yields the process $\comm(q_1 \parr q_2)$ with $C = \set{\substfun[](a) \mact \cosubstfun[](a) \rightarrow \silentact \: , \: \substfun[](b) \mact \cosubstfun[](b) \rightarrow \silentact}$. The special action $\silentact$ serves as a placeholder action for $\silent$, and we can hide it immediately using the abstraction operator~$\hide[]$;%
\footnote{%
	We use this construction, because mCRL2 does not permit communications to map directly to $\silent$.
} 
henceforth, without loss of generality, we assume $\silentact \notin \acts(p)$ for each $p$. In our running example, this yields the process $\hide(\comm(q_1 \parr q_2))$ with $I = \set{\silentact}$ and $C$ as before.
But also this process is not equal to $q$ yet: only synchronization and abstraction do not suffice---we must also block those actions whose performance in isolation ``makes no sense.'' For instance, we consider every unpaired occurrence of $\cosubstfun[](a)$ in a multi-action $\alpha$ nonsensical: intuitively, performing $\cosubstfun[](a)$ suggests that some process discovers that another process performs $a$, even though this does not happen (otherwise, also $\substfun[](a)$ would occur in $\alpha$). By symmetry, we consider also every unpaired occurrence of $\substfun[](a)$ nonsensical. To block unpaired occurrences of $\substfun[](a)$ and $\cosubstfun[](a)$, we apply the blocking operator $\block[]$. In our running example, this yields the process $\block(\hide(\comm(q_1 \parr q_2)))$ with $B = \set{\substfun[](a) , \cosubstfun[](a) , \substfun[](b) , \cosubstfun[](b)}$ and $I$ and $C$ as before. This process equals $q$. 

%
\subsection{Formal Definitions}
\label{sect:split:defs}

\begin{figure}[t]
	\newcommand{\HEIGHT}{56pt}
	\newcommand{\separator}{\enspace|\enspace}
	
	\hfil
	\begin{minipage}{266pt}
		\fbox{
			\vbox to \HEIGHT {%
				\vfil
				\hbox{
					$\begin{array}{@{} l @{\enspace} c @{\enspace} l @{}}
						\substenvdom(\substenv)		& = & dom(\substfun) \cap dom(\cosubstfun)
					\\	\substenvimg(\substenv)		& = & img(\substfun) \cup img(\cosubstfun)
					\\	\substenvcomm(\substenv)	& = & \setbuild{\substfun(a) \mact \cosubstfun(a) \rightarrow \silentact}{ (a , w) \in \substenvdom(\substenv)}
					\end{array}$
				}
				\vfil
			}
		}
		\caption{Auxiliary functions for substitution environments.}
		\label{fig:auxfuns:substenv}
	\end{minipage}
	\hfil
	\begin{minipage}{159pt}
		\fbox{
			\vbox to \HEIGHT {%
				\vfil
				\hbox {
					\begin{tabular}{@{} r @{$\quad$} l @{}}
						\textup{Q1}	& $\qmark(\silent) \proceq \silent$
					\\	\textup{Q2}	& $\qmark(\dead) \proceq \dead$
					\\	\textup{Q3}	& $\qmark(p \ch q) \proceq \qmark(p) \ch \qmark(q)$
					\\	\textup{Q4}	& $\qmark(p \seq q) \proceq \qmark(p) \seq \qmark(q)$
					\end{tabular}
				}
				\vfil
			}
		}
		
		\caption{Axioms for $\qmark[]$.}
		\label{fig:axioms:qmark}
	\end{minipage}
	\hfil
\end{figure}

We proceed with formal definitions of the splitting procedure outlined above. We start with a formal account of the fresh auxiliary actions of the form $\substfun[](a)$ and $\cosubstfun[](a)$. As suggested by this notation, $\substfun[]$ and $\cosubstfun[]$ denote functions that take an action $a$ as their input and produce another action as their output. We collect such pairs of functions in \emph{substitution environments} as follows. Let $\set{\one , \two}^*$ denote the set of finite strings over $\set{\one , \two}$, ranged over by $w$, $v$, $u$, etc.
\begin{definition}
	\label{def:substenv}
	A substitution environment, typically denoted by $\substenv$, is a quintuple $(\dot P \mapsto \dot p , \mathbb{A} , \silentact , \substfun[] , \cosubstfun[])$ consisting of a process definition $\dot P \mapsto \dot p$, a set $\mathbb{A} \subseteq \actuniv$, an action $\silentact \in \actuniv \setminus \mathbb{A}$ and injective functions $\substfun[] , \cosubstfun[] : \mathbb{A} \times \set{\one , \two}^* \rightarrowtail \actuniv \setminus (\mathbb{A} \cup \silentact)$ such that $img(\substfun[]) \cap img(\cosubstfun[]) = \emptyset$.%
\end{definition}
\noindent Henceforth, we write $\substfun(a)$ and $\cosubstfun(a)$ instead of $\substfun[](a , w)$ and $\cosubstfun[](a , w)$. Note that we dropped the $w$ subscripts in our running example above: as we did not need this extra string of information, we omitted it for simplicity. In the general case, however, this information plays a key role, as explained shortly. The process definition in a substitution environment represents the main process to be split.

Figure~\ref{fig:auxfuns:substenv} shows auxiliary functions for substitution environment. The functions ``$\substenvdom$'' and ``$\substenvimg$'' map substitution environments to their domain and image. The function ``$\substenvcomm$'' maps substitution environments to communications derivable from them.

\begin{figure}[t]
	\centering 
	\fbox{
		\begin{minipage}{.75\linewidth}
		$\begin{array}{@{} l @{\;} c @{\;} l @{\enspace} l @{}}
			\isol(a, A , w)					& =	& a \mact \substfun(a)									& \mbox{if } a \in A
		\\	\isol(b, A , w)					& =	& \cosubstfun(b)										& \mbox{if } b \notin A
		\end{array}$
		\hfill
		$\begin{array}{@{} l @{\;} c @{\;} l @{\enspace} l @{}}
			\coisol(a, A , w)				& =	& \cosubstfun(a)										& \mbox{if } a \in A
		\\	\coisol(b, A , w)				& =	& b \mact \substfun(b)									& \mbox{if } b \notin A
		\end{array}$
		
		\centering
		$\begin{array}{@{} l @{\;} c @{\;} l @{\enspace} l @{}}
		\\	\hatisol(\vartheta , A , w)		& =	& \vartheta												& \mbox{for } \vartheta \in \set{\silent , \dead}
		\\	\hatisol(p \binop q , A , w)	& =	& \hatisol(p , A , w) \binop \hatisol(q , A , w)		& \mbox{for } \binop \in \set{\seq , \mact}
		\\	\hatisol(p \ch q , A , w)		& =	& \hatisol(p , A , w\one) \ch \hatisol(q , A , w\two)
		\end{array}$
		\end{minipage}
	}
	
	\caption{The functions $\isol[]$ and $\coisol[]$. Let $p \in \seqprocuniv$ and $\hatisol[] \in \set{\isol[] , \coisol[]}$.}
	\label{fig:isol,coisol}
\end{figure}

To formalize the notions of $A$-isolation and $A$-coisolation, we introduce the functions $\isol[]$ and $\coisol[]$. Figure~\ref{fig:isol,coisol} shows their definitions. The functions $\isol[]$ and $\coisol[]$ take for arguments a sequential process, a set of actions $A \subseteq \actuniv$, a string $w \in \set{\one , \two}^*$, and a substitution environment (as a subscript for notational convenience). For most processes $p$, $\isol(p , A , w)$ and $\coisol(p , A , w)$ invoke themselves recursively on $p$'s immediate subprocesses, the same set $A$, and the same string $w$. One exception exists: processes of the form $p \ch q$. For such processes, $\isol[]$ and $\coisol[]$ invoke themselves recursively on $w\one$ and $w\two$ instead of~$w$. This ensures that in their parallel composition, the process $\isol(p \ch q , A , w)$ can ``track'' which choice the process $\coisol(p \ch q , A , w)$ makes and vice versa. 

To clarify this, let us illustrate what would happen if $\isol(p \ch q , A , w)$ and $\coisol(p \ch q , A , w)$ invoke themselves recursively without changing $w$. In that case, $w$ has no influence on the behavior of $\isol[]$ and $\coisol[]$, and we can omit it from our definitions. Now, suppose that we want to compose the $\set{a}$-isolation and $\set{a}$-coisolation of the process $r = a \seq b \ch a \seq c$ in parallel. We have:
\begin{center}
	$\begin{array}{@{} l @{\;} c @{\;} r @{\,} r @{\,} r @{\,} l @{}}
		\isol(r , \set{a})		& =	&  														& a \mact \substfun[](a) \seq {}	& \cosubstfun[](b)			& {} \ch a \mact \substfun[](a) \seq \cosubstfun[](c)
	\\	\coisol(r , \set{a})	& =	& \cosubstfun[](a) \seq b \mact \substfun[](b) \ch {}	& \cosubstfun[](a) \seq {}			& c \mact \substfun[](c)
	\end{array}$
\end{center}
Thus, the process $\isol(r , \set{a})$ can erroneously synchronize its left-most multi-action $a \mact \substfun[](a)$ with the right-most multi-action $\cosubstfun[](a)$ of the process $\coisol(r , \set{a})$. By changing $w$ in the recursive invocations of $\isol(p \ch q , A , w)$ and $\coisol(p \ch q , A , w)$, this problem does not arise: it ensures that $a \mact \substfun[w\one](a)$ (on the left) can synchronize only with $\cosubstfun[w\one](a)$ (also on the left)---not with $\cosubstfun[w\two](a)$ (on the right). Note that this depends on the injectivity of $\substfun[]$ and $\cosubstfun[]$ (see Definition~\ref{def:substenv}).

The definition of the function $\bisect[]$ follows straightforwardly now that we have the functions $\isol[]$ and $\coisol[]$. We also introduce an auxiliary operator, denoted by $\qmark[]$, which encapsulates the communication, hiding, and blocking necessary to get equality of processes. Figure~\ref{fig:axioms:qmark} shows axioms for this operator.%
\footnote{%
	The axiom Q1 follows from the axioms C1, H1, and B1 in Figure~\ref{fig:axioms} in Section~\ref{sect:axioms}; Q2 follows from C2, H5, and B5; Q3 follows from C3, H6, and B6; Q4 follows from C4, H7, and B7.
}

\begin{definition}
	\label{def:qmark}
	$\qmark(p) = \block[\substenvimg(\substenv)](\hide[\set{\silentact}](\comm[\substenvcomm(\substenv)](p)))$
\end{definition}

\begin{definition}
	\label{def:bisect}
	For all $\substenv = (\dot P \mapsto \dot p , \mathbb{A} , \silentact , \substfun[] , \cosubstfun[])$ such that $\acts(p) \subseteq \mathbb{A}$,
	\begin{center}
		$\bisect(p , A , w) = \left\{ \begin{array}{@{} l @{\enspace} l @{}}
			\qmark(\isol(p , A , w) \parr {} \coisol(p , A , w))	& \mbox{if } p \in \seqprocuniv
		\\	\bisect(p_1 , A , w) \binop \bisect(p_2 , A , w)		& \mbox{if } p \notin \seqprocuniv \mbox{ and } p = p_1 \binop p_2 \mbox{ and } \binop \in \set{\seq , \ch , \parr , \lmerge , \sync}
		\\	\fun(\bisect(p_1 , A , w))								& \mbox{if } p = \fun(p_1) \mbox{ and } \fun \in \set{\restr , \block , \rename , \comm , \hide}
		\\	\refbisect(\dot P , A , w)								& \mbox{if } p = \dot P
		\end{array} \right.$
	\end{center}
	where $\refbisect(\dot P , A , w)$ denotes a reference to the process $\bisect(\dot p , A , w)$.
\end{definition}

\subsection{Theorems}
\label{sect:split:thm}

Suppose an execution environment $\substenv = (\dot P \mapsto \dot p , \mathbb{A} , \silentact , \substfun[] , \cosubstfun[])$. We prove that splitting $\dot p$ as described above yields a process equal to $\dot p$. We proceed in three steps. First, we prove our result for multi-actions. Then, we extend this result to sequential processes. Finally, we establish it for general processes. In each of these theorems we restrict our attention to \emph{syntactically $\silent$-free} specifications, because we work under strong bisimulation. Under equivalences weaker than strong bisimulation, we can relax this $\silent$-freeness. 

The axioms occasionally referred to in the remainder of this section appear in Figure~\ref{fig:axioms}, Section~\ref{sect:axioms}.

\subsubsection{A theorem for multi-actions}
\label{sect:split:thm:mact}

We start with a theorem for multi-actions, which states that splitting a syntactically $\silent$-free multi-action equals that multi-action. Let $\silentfree(\alpha)$ denote that $\silent$ does not occur in $\alpha$ (see \ifthenelse{\boolean{isTechReport}}{Section~\ref{sect:proofs}}{\cite{JCP12}} for a formal definition).
\begin{theorem}
	\label{theorem:bisect(mact):id}
	For all $\substenv = (\dot P \mapsto \dot p , \mathbb{A} , \silentact , \substfun[] , \cosubstfun[])$ such that $\acts(\alpha) \cup A \subseteq \mathbb{A}$,
	\begin{center}
		$\silentfree(\alpha) \IMPLIES \bisect(\alpha , A , w) \proceq \alpha$
	\end{center}
\end{theorem}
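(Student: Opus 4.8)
The plan is to evaluate $\bisect(\alpha, A, w) = \qmark(\isol(\alpha, A, w) \parr \coisol(\alpha, A, w))$ by direct computation rather than by induction on $\alpha$, since the operators folded into $\qmark[]$ and the merge $\parr$ act globally on a multi-action and do not distribute over $\mact$. First I would use commutativity and associativity of $\mact$ to normalize the $\silent$-free multi-action as $\alpha = a_1 \mact \cdots \mact a_m \mact b_1 \mact \cdots \mact b_n$ with each $a_i \in A$ and each $b_j \notin A$; since $\silentfree(\alpha)$ rules out $\alpha = \silent$, we have $m + n \geq 1$. Unfolding the clauses of $\isol[]$ and $\coisol[]$, both of which distribute over $\mact$, then produces two single multi-actions
\[
P = \isol(\alpha, A, w) = a_1 \mact \substfun(a_1) \mact \cdots \mact a_m \mact \substfun(a_m) \mact \cosubstfun(b_1) \mact \cdots \mact \cosubstfun(b_n),
\]
\[
Q = \coisol(\alpha, A, w) = \cosubstfun(a_1) \mact \cdots \mact \cosubstfun(a_m) \mact b_1 \mact \substfun(b_1) \mact \cdots \mact b_n \mact \substfun(b_n).
\]
The merge of two single multi-actions expands into three summands, $P \parr Q \proceq P \seq Q \ch Q \seq P \ch P \mact Q$, and I would push $\qmark[]$ inside the $\ch$ via axiom Q3 and simplify each summand in turn.

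For the two sequential summands, the leading multi-action---$P$ in $P \seq Q$ and $Q$ in $Q \seq P$---contains an auxiliary action but no complete communication pair: by construction $P$ carries only $\substfun(a_i)$ for $a_i \in A$ and $\cosubstfun(b_j)$ for $b_j \notin A$, hence neither $\cosubstfun(a_i)$ nor $\substfun(b_j)$, and symmetrically for $Q$. Therefore $\comm[\substenvcomm(\substenv)]$, whose rules each require both halves $\substfun(a) \mact \cosubstfun(a)$, fires on nothing in the leading multi-action, and $\hide[\set{\silentact}]$ leaves it untouched since no $\silentact$ is produced. Because the leading multi-action still contains an action of $\substenvimg(\substenv)$, the blocking $\block[\substenvimg(\substenv)]$ collapses it to $\dead$, so both sequential summands reduce to $\dead$ and vanish ($\dead$ being neutral for $\ch$).

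For the synchronized summand $P \mact Q$, every pair $\substfun(a_i) \mact \cosubstfun(a_i)$ and $\substfun(b_j) \mact \cosubstfun(b_j)$ now occurs in the single multi-action; since Definition~\ref{def:substenv} makes $\substfun[]$ and $\cosubstfun[]$ injective with disjoint images, these pairs sit on pairwise-disjoint actions, so the communication rules apply independently and replace each pair by $\silentact$, leaving $a_1 \mact \cdots \mact a_m \mact b_1 \mact \cdots \mact b_n \mact \silentact \mact \cdots \mact \silentact$. Hiding rewrites each $\silentact$ to $\silent$, which disappears as the unit of $\mact$, and blocking has no effect because the surviving actions $a_i, b_j$ lie in $\mathbb{A}$ and thus outside $\substenvimg(\substenv) \subseteq \actuniv \setminus (\mathbb{A} \cup \set{\silentact})$. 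What remains is exactly $\alpha$, and collecting the summands gives $\bisect(\alpha, A, w) \proceq \dead \ch \dead \ch \alpha \proceq \alpha$.

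I expect the communication bookkeeping to be the crux. The argument must establish simultaneously that $\comm[\substenvcomm(\substenv)]$ completes every pair in the synchronized summand yet completes none in either sequential summand, and this rests entirely on the structural invariant that each of $\isol(\alpha, A, w)$ and $\coisol(\alpha, A, w)$ supplies only one half of every communication pair---an invariant guaranteed by the injectivity and disjoint-image requirements of Definition~\ref{def:substenv}. Making the joint action of $\comm[]$, $\hide[]$, and $\block[]$ on a single multi-action precise against the axioms of Section~\ref{sect:axioms} is the delicate step.
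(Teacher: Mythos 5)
Your proof is correct, but it follows a genuinely different route from the paper's. The paper proves Theorem~\ref{theorem:bisect(mact):id} by structural induction on $\alpha$: the base case treats a single action, and the step for $\alpha_1 \mact \alpha_2$ first converts $\parr$ into $\sync$ via Lemma~\ref{lemma:qmark(isol,coisol):ms}, then splits the communication operator across the two halves via Corollary~\ref{corol:comm(mact):addit}, and finally reassembles using the induction hypotheses. You instead flatten $\alpha$ (modulo MA1/MA2) into its constituent actions, write $\isol(\alpha , A , w)$ and $\coisol(\alpha , A , w)$ as explicit single multi-actions, expand $\parr$ once via axiom M (with LM1 and SMA), and evaluate the three resulting summands directly: the two $\lmerge$-derived summands die by blocking of unpaired auxiliary actions followed by A7, and the $\sync$ summand is computed outright through $\comm[]$, $\hide[]$, $\block[]$. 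In effect you inline the content of Lemma~\ref{lemma:qmark(isol,coisol):ms} (your treatment of the sequential summands is exactly its proof idea) and replace the induction-plus-additivity machinery (Lemma~\ref{lemma:comm(mact):addit}, Corollary~\ref{corol:comm(mact):addit}) by one explicit evaluation of the communication function on a fully paired multi-action. Your route is more elementary and self-contained for this theorem; the paper's modular route pays off later, because Lemma~\ref{lemma:qmark(isol,coisol):ms} and the additivity results are stated for sequential processes and are reused verbatim in the proof of Theorem~\ref{theorem:bisect(seqproc):id}, where your flattening trick is unavailable. Three points in your write-up deserve explicit justification, though none is a genuine gap: (i) normalizing $\alpha$ before applying $\isol[]$ and $\coisol[]$ needs the (easy) observation that these syntactic functions commute with MA1/MA2 rearrangements up to $\proceq$ --- the paper's induction sidesteps this entirely; (ii) collapsing the sequential summands to $\dead$ needs $\qmark[]$ to distribute over $\seq$ (axiom Q4, i.e., C4/H7/B7) before A7 applies; (iii) if the same action occurs more than once in $\alpha$, the corresponding communication rule must fire repeatedly, so your claim that the rules ``apply independently'' on pairwise-disjoint actions must account for multiplicities --- this is precisely what the $\mactquant_{n}$ notation in Lemma~\ref{lemma:comm(mact):addit} is for, and the recursive definition of $\commfun[]$ does handle it.
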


To prove this theorem, we need some auxiliary lemmas. We formulate these lemmas below; detailed proofs, as well as additional propositions on which these proofs rely, appear in \ifthenelse{\boolean{isTechReport}}{Section~\ref{sect:proofs}}{\cite{JCP12}}. The first lemma states that the parallel composition of the isolation and the coisolation of a process equals their synchronous composition (after applying communication, hiding, and blocking).
\begin{lemma}
	\label{lemma:qmark(isol,coisol):ms}
	For all $\substenv = (\dot P \mapsto \dot p , \mathbb{A} , \silentact , \substfun[] , \cosubstfun[])$ such that $\acts(p) \cup A \subseteq \mathbb{A}$,
	\begin{center}
		\begin{tabular}[t]{@{}l@{}}
			$\big[\silentfree(p) \AND p \in \seqprocuniv \big] \IMPLIES$
		\\	$\quad \qmark(\isol(p , A , w) \parr \coisol(p , A , w)) \proceq \qmark(\isol(p , A , w) \sync \coisol(p , A , w))$
		\end{tabular}
	\end{center}
\end{lemma}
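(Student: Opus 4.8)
The plan is to expand the parallel composition on the left using the standard merge axiom $x \parr y \proceq x \lmerge y \ch y \lmerge x \ch x \sync y$, instantiated with $x = \isol(p , A , w)$ and $y = \coisol(p , A , w)$, and then to show that after applying $\qmark[]$ the two left-merge summands both collapse to $\dead$, while the synchronous summand is already the right-hand side. Since $\qmark[]$ distributes over choice (axiom Q3) and $\dead \ch z \proceq z$, this gives the claim at once, \emph{provided} we can establish
\[
	\qmark(\isol(p , A , w) \lmerge q) \proceq \dead \AND \qmark(\coisol(p , A , w) \lmerge q) \proceq \dead
\]
for arbitrary $q$ (we then instantiate $q$ with $\coisol(p , A , w)$ and $\isol(p , A , w)$, respectively).

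The conceptual core is a single observation about multi-actions. For any $\silent$-free multi-action $\alpha$ with $\acts(\alpha) \subseteq \mathbb{A}$, the term $\isol(\alpha , A , w)$ is again one multi-action, and by the definition of $\isol[]$ it contains, for each constituent action of $\alpha$, an auxiliary action from $\substenvimg(\substenv)$ --- namely $\substfun(a)$ when $a \in A$ and $\cosubstfun(b)$ when $b \notin A$ --- but it never contains a matching pair $\substfun(a) \mact \cosubstfun(a)$ for one and the same $a$ and string $w$. Hence no communication rule in $\substenvcomm(\substenv)$ applies, so $\comm[\substenvcomm(\substenv)]$ leaves $\isol(\alpha , A , w)$ unchanged; $\hide[\set{\silentact}]$ leaves it unchanged because $\silentact$ does not occur in it; and $\block[\substenvimg(\substenv)]$ then blocks the surviving auxiliary action, collapsing the whole multi-action to $\dead$. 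Thus $\qmark(\isol(\alpha , A , w)) \proceq \dead$, and symmetrically $\qmark(\coisol(\alpha , A , w)) \proceq \dead$. This is exactly where $\silentfree(p)$ is used: a $\silent$ would yield $\isol(\silent , A , w) = \silent$, which carries no auxiliary action and would survive the blocking.

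To lift this to a whole left-merge, I would proceed by structural induction on the ($\silent$-free, sequential) process $p$, reducing $\isol(p , A , w)$ to head normal form --- a choice of summands $\alpha_i \seq p_i'$, bare multi-actions $\beta_j$, and possibly a $\dead$ summand --- which is legitimate because $\isol[]$ follows the syntactic structure of $p$ and the basic-operator axioms (right-distributivity of $\seq$ over $\ch$, etc.) preserve this shape. The left-merge axioms then rewrite $\isol(p , A , w) \lmerge q$ into a choice of terms of the forms $\alpha_i \seq (p_i' \parr q)$, $\beta_j \seq q$, and $\dead \lmerge q \proceq \dead$. Applying $\qmark[]$ and distributing it over choice and sequencing (axioms Q3 and Q4) turns each such term into $\qmark(\alpha_i) \seq (\cdots) \proceq \dead \seq (\cdots) \proceq \dead$ (using the core observation and $\dead \seq z \proceq \dead$), and likewise for the $\beta_j$ summands; the resulting choice of copies of $\dead$ is again $\dead$.

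The main obstacle is not any single deep step but the bookkeeping of this induction: justifying that $\isol[]$ and $\coisol[]$ commute with the passage to head normal form, and tracking the string parameter, which shifts to $w\one$ and $w\two$ across a choice $p_1 \ch p_2$. The latter is harmless precisely because the inductive statement is quantified over all $w$, and because the injectivity of $\substfun[]$ and $\cosubstfun[]$ (Definition~\ref{def:substenv}) guarantees that the auxiliary actions introduced for distinct branches never coincide, so no spurious communication is created across branches. Once both left-merge terms are collapsed to $\dead$, the merge expansion leaves exactly $\qmark(\isol(p , A , w) \sync \coisol(p , A , w))$, which is the right-hand side, completing the proof.
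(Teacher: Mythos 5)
Your proof is correct and takes essentially the same route as the paper's: expand $\parr$ via axiom M, observe that every initial multi-action of $\isol(p,A,w)$ or $\coisol(p,A,w)$ contains an unpaired auxiliary action $\substfun(a)$ or $\cosubstfun(a)$ (by injectivity and image-disjointness of $\substfun[]$, $\cosubstfun[]$, and by $\silent$-freeness), so the blocking inside $\qmark[]$ collapses it to $\dead$, after which A7 and A6 eliminate both left-merge summands, leaving the synchronous one. Your head-normal-form induction is just a more explicit rendering of the bookkeeping the paper relegates to its appendix.
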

\begin{proof}
	[Proof (sketch)]
	By the axioms M and A6, we must show that $\qmark(\isol(p , A , w) \lmerge \qmark(\coisol(p , A , w)$ and $\qmark(\coisol(p , A , w) \lmerge \qmark(\isol(p , A , w)$ equal $\dead$. Both of these processes start with a multi-action $\ALPHA$. By the definition of $\isol[]$ and $\coisol[]$ (and, in particular, the injectivity and image-disjointness of $\substfun[]$ and $\cosubstfun[]$), if $\ALPHA \neq \dead$, it must contain an action $\substfun(a)$ without $\cosubstfun(a)$ (or vice versa). But then, the blocking operator in $\qmark[]$ (combined with SMA) will equate $\ALPHA$ to $\dead$. This suffices to show that $\qmark(\isol(p , A , w) \lmerge \qmark(\coisol(p , A , w)$ and $\qmark(\coisol(p , A , w) \lmerge \qmark(\isol(p , A , w)$ equal $\dead$ by A7 (because these processes start with $\ALPHA$). See \ifthenelse{\boolean{isTechReport}}{Section~\ref{sect:proofs:lemma:qmark(isol,coisol):ms}}{\cite{JCP12}} for a detailed proof.
\end{proof}
\noindent Note that Lemma~\ref{lemma:qmark(isol,coisol):ms} involves sequential processes rather than only multi-actions. This enables us to use this lemma also in our proof of Theorem~\ref{theorem:bisect(seqproc):id}, below.

The following lemma consists of two parts. The first part states that one can rewrite every multi-action composed of the isolation and the coisolation of a multi-action $\alpha$ into a representation with the following characteristics: (i) for every communication $\beta_i \rightarrow \silentact$ induced by the substitution environment involved, $\beta_i$ occurs zero or more times; (ii) the \emph{remainder} $\breve \alpha$ does not contain any fragment of any $\beta_i$ and vice versa, denoted as $\breve \alpha \mactindep \beta_i$. (See \ifthenelse{\boolean{isTechReport}}{Section~\ref{sect:proofs}}{\cite{JCP12}} for a formal definition of the latter relation.) The second part of the following lemma states the additivity property $\comm(\alpha \mact \alpha_2) \proceq \comm(\alpha_1) \mact \comm(\alpha_2)$ when $\alpha_1$ and $\alpha_2$ each have such a representation. Let $\mactquant_n \beta$ denote the sequence $\beta \mact \cdots \mact \beta$ of length $n$.
\begin{lemma}
	\label{lemma:comm(mact):addit}
	$ $
 	\begin{enumerate}
 		\item 
 		For all $\substenv = (\dot P \mapsto \dot p , \mathbb{A} , \silentact , \substfun[] , \cosubstfun[])$ such that $\acts(\alpha) \cup A \subseteq \mathbb{A}$ and $\commdom(\substenvcomm(\substenv)) = \set{\beta_1 , \ldots , \beta_k}$,
 		\begin{center}
			$\isol(\alpha , A , w) \mact \coisol(\alpha , A , w) \proceq \mactquant_{n_1} \beta_1 \mact \cdots \mact {} \mactquant_{n_k} \beta_k \mact \breve \alpha \AND  \breve \alpha \mactindep \beta_i$
		\end{center}
		
		\item 
		For all $C = \set{\beta_1 \rightarrow b_1 , \ldots , \beta_k \rightarrow b_k}$,
		 \begin{center}
			$\left[ \begin{array}{@{} c @{} l @{\;}}
						& \alpha_1 = \mactquant_{n_1} \beta_1 \mact \cdots \mact {} \mactquant_{n_k} \beta_k ~\mkern1.5mu \mact \breve \alpha_1 \AND \breve \alpha_1 \mactindep \beta_i
			\\	\AND	& \alpha_2 = \mactquant_{m_1} \beta_1 \mact \cdots \mact {} \mactquant_{m_k} \beta_k \mact \breve \alpha_2 \AND \breve \alpha_2 \mactindep \beta_i
			\end{array} \right] \IMPLIES \left[ \begin{array}{@{} l @{}}
				\comm(\alpha_1 \mact \alpha_2) \proceq {}
			\\	\quad \comm(\alpha_1) \mact \comm(\alpha_2)
			\end{array} \right]$
		\end{center}
 	\end{enumerate}
\end{lemma}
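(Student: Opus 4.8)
The plan is to dispatch the two parts by different means: Part~1 is a direct structural computation on the multi-action, while Part~2 reduces, after one auxiliary observation about $\comm[]$, to bookkeeping with the exponents and remainders.

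For Part~1, I would induct on the structure of $\alpha$, which is built from the symbols $\A$ by the operator $\mact$, using that both $\isol[]$ and $\coisol[]$ distribute over $\mact$ (the clause for $\binop = \mact$ in Figure~\ref{fig:isol,coisol}). In the base case of a single action $a$, whether $a \in A$ or $a \notin A$, forming $\isol(a, A, w) \mact \coisol(a, A, w)$ produces $a \mact \substfun(a) \mact \cosubstfun(a)$ up to commutativity of $\mact$; since $\substfun(a) \mact \cosubstfun(a)$ is exactly the left-hand side that $\substenvcomm(\substenv)$ associates with $a$ (here $(a, w) \in \substenvdom(\substenv)$ because $a \in \acts(\alpha) \subseteq \mathbb{A}$), this is one copy of some $\beta_j$ together with the remainder $\breve\alpha = a$. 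The case $\alpha = \silent$ gives $\silent$ on both sides, so all exponents vanish. In the inductive case $\alpha = \alpha' \mact \alpha''$, distributivity and commutativity of $\mact$ regroup the product as $(\isol(\alpha', A, w) \mact \coisol(\alpha', A, w)) \mact (\isol(\alpha'', A, w) \mact \coisol(\alpha'', A, w))$, so the two induction hypotheses simply add the exponents and concatenate the remainders. The independence $\breve\alpha \mactindep \beta_i$ is where Definition~\ref{def:substenv} is used: the remainder consists only of actions of $\alpha$, hence of $\mathbb{A}$, whereas every $\beta_i$ is built from actions in the image of $\substfun[]$ or $\cosubstfun[]$, which lies in $\actuniv \setminus (\mathbb{A} \cup \silentact)$; these two sets of action names are disjoint, and disjointness is preserved when remainders are concatenated.

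For Part~2, I would first isolate the one nontrivial ingredient as a helper about canonical evaluation of $\comm[]$: for any multi-action of the stated shape,
\[
	\comm(\mactquant_{n_1} \beta_1 \mact \cdots \mact \mactquant_{n_k} \beta_k \mact \breve\alpha) \proceq \mactquant_{n_1} b_1 \mact \cdots \mact \mactquant_{n_k} b_k \mact \breve\alpha \quad \text{whenever } \breve\alpha \mactindep \beta_i \text{ for all } i .
\]
This I would prove by induction on $\sum_j n_j$ against the communication axioms of Figure~\ref{fig:axioms}: in the base case $\breve\alpha$ shares no action with any $\beta_i$, so no left-hand side occurs and $\comm[]$ behaves as the identity; in the step I peel off one complete copy of some $\beta_j$, rewrite it to $b_j$, and invoke the induction hypothesis on what remains. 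With the helper in hand, Part~2 is pure algebra: apply it to $\alpha_1$, to $\alpha_2$, and to $\alpha_1 \mact \alpha_2 = \mactquant_{n_1 + m_1} \beta_1 \mact \cdots \mact \mactquant_{n_k + m_k} \beta_k \mact (\breve\alpha_1 \mact \breve\alpha_2)$ --- where $\breve\alpha_1 \mact \breve\alpha_2 \mactindep \beta_i$ follows from the two hypotheses --- and observe, by commutativity and associativity of $\mact$, that both $\comm(\alpha_1 \mact \alpha_2)$ and $\comm(\alpha_1) \mact \comm(\alpha_2)$ reduce to $\mactquant_{n_1 + m_1} b_1 \mact \cdots \mact \mactquant_{n_k + m_k} b_k \mact (\breve\alpha_1 \mact \breve\alpha_2)$.

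I expect the helper to be the main obstacle, since $\comm[]$ is not additive on multi-actions in general --- combining two incomplete fragments can trigger a communication that neither side exhibits alone --- so the argument must lean on the special shape and the independence hypothesis precisely. Two facts make the induction go through and must be checked against Figure~\ref{fig:axioms}: that distinct $\beta_i$ do not compete for action names, which holds because the injectivity and image-disjointness of $\substfun[]$ and $\cosubstfun[]$ (Definition~\ref{def:substenv}) keep all their action names distinct and make $\comm[]$ evaluate deterministically; and that the independence of each $\breve\alpha$ from every $\beta_i$ forbids any accidental communication from forming between the remainder and leftover fragments --- there are none, because each $\alpha_i$ already contains only complete copies of the $\beta_j$ plus an independent remainder. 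Turning the informal ``replace every complete copy'' into a rigorous induction on $\sum_j n_j$ over the concrete axioms is the step demanding the most careful bookkeeping.
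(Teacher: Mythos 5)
Your Part~1 is the paper's own argument: the same structural induction on $\alpha$, the same base case ($\isol(a , A , w) \mact \coisol(a , A , w)$ equals $\beta_\ell \mact a$ modulo commutativity, with $\breve\alpha = a$), and the same appeal to the disjointness of $\mathbb{A}$ and the images of $\substfun[]$ and $\cosubstfun[]$ (Definition~\ref{def:substenv}) to get $\breve\alpha \mactindep \beta_i$. For Part~2 your route differs in presentation but not in substance: the paper argues directly that no communication is applicable in $\alpha_1 \mact \alpha_2$ beyond those already applicable in $\alpha_1$ or $\alpha_2$ separately, whereas you first prove the stronger normal-form statement $\comm(\mactquant_{n_1} \beta_1 \mact \cdots \mact \mactquant_{n_k} \beta_k \mact \breve\alpha) \proceq \mactquant_{n_1} b_1 \mact \cdots \mact \mactquant_{n_k} b_k \mact \breve\alpha$ by induction on $\sum_j n_j$, and then obtain additivity by evaluating $\comm(\alpha_1)$, $\comm(\alpha_2)$ and $\comm(\alpha_1 \mact \alpha_2)$ to a common form. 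That is a legitimate rigorization---essentially what any detailed proof via C1, CL1 and the definition of $\commfun$ must do---and it additionally yields the explicit value of $\comm[]$ on canonical forms, which is what Corollary~\ref{corol:comm(mact):addit} and the proof of Theorem~\ref{theorem:bisect(mact):id} actually exploit.

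One point needs repair. Part~2 quantifies over \emph{arbitrary} $C = \set{\beta_1 \rightarrow b_1 , \ldots , \beta_k \rightarrow b_k}$, so you cannot discharge your two side conditions (pairwise action-disjointness of the $\beta_i$, and no $b_j$ occurring in any $\beta_l$) by citing Definition~\ref{def:substenv}: nothing in Part~2 says the $\beta_i$ arise from a substitution environment. And these conditions are genuinely needed. Take $\beta_1 = a \mact b$, $\beta_2 = c \mact d$, $\beta_3 = a \mact c$, $\alpha_1 = \beta_1$, $\alpha_2 = \beta_2$ (remainders empty, hence trivially independent): then $\alpha_1 \mact \alpha_2$ contains a copy of $\beta_3$ straddling the two complete copies, $\commfun$ is not even evaluation-order independent, and the evaluation order that fires $\beta_3$ first gives $z \mact b \mact d$ rather than $x \mact y$, breaking additivity. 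To be fair, the paper's sketch silently assumes the same thing---its claim that ``there exists no communication in $\alpha_1 \mact \alpha_2$ that did not exist already'' is false for such overlapping $C$---and the only instance the paper ever uses, $C = \substenvcomm(\substenv)$, satisfies both conditions (injectivity and image-disjointness of $\substfun[] , \cosubstfun[]$, and $b_j = \silentact$ outside all images). So your proof is sound for the lemma as it is used; just state the disjointness properties as explicit hypotheses of your helper lemma instead of attributing them to Definition~\ref{def:substenv}.
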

\begin{proof}
	[Proof (sketch)]
	$ $
	\begin{enumerate}
		\item
		If $\alpha = a$, there exists a $\beta_\ell = \substfun(a) \mact \cosubstfun(a)$ for some $\ell$. By the definition of $\isol[]$ and $\coisol[]$, we have that $\isol(\alpha , A , w) \mact \coisol(\alpha , A , w) = \beta_\ell \mact a$. Identifying $\breve \alpha$ with $a$, we must show that $a$ does not occur in any $\beta_i$. This follows from the fact that $\substfun[]$ and $\cosubstfun[]$ have disjoint domains and images by their definition. The general case follows by structural induction.
		
		\item 
		Because $\breve \alpha_1$ and $\breve \alpha_2$ do not contain any fragment of any $\beta_i$ (i.e., $\breve \alpha_1 \mactindep \beta_i$ and $\breve \alpha_2 \mactindep \beta_i$), combining them in the same multi-action does not make the communication operator applicable: there exists no communication in $\alpha_1 \mact \alpha_2$ that did not exist already in $\alpha_1$ or in $\alpha_2$.
	\end{enumerate}
	See \ifthenelse{\boolean{isTechReport}}{Section~\ref{sect:proofs:lemma:comm(mact):addit}}{\cite{JCP12}} for a detailed proof. 
\end{proof}

\noindent The following corollary follows from the previous lemma: it asserts the additivity property $\comm(\alpha_1 \mact \alpha_2) \proceq \comm(\alpha_1) \mact \comm(\alpha_2)$ for $\alpha_1 = \isol(\alpha , A , w) \mact \coisol(\alpha , A , w)$ and $\alpha_2 = \isol(\beta , A , w) \mact \coisol(\beta , A , w)$.
\begin{corollary}
	\label{corol:comm(mact):addit}
	For all $\substenv = (\dot P \mapsto \dot p , \mathbb{A} , \silentact , \substfun[] , \cosubstfun[])$ such that $\acts(\alpha) \cup A \subseteq \mathbb{A}$,
	\begin{center}
		\begin{tabular}[t]{@{}l@{}}
			$\left[ \begin{array}{@{}l@{}}
				\comm[\substenvcomm(\substenv)](\isol(\alpha , A , w) \mact \coisol(\alpha , A , w) \mact \isol(\beta , A , w) \mact \coisol(\beta , A , w)) \proceq {}
			\\	\quad \comm[\substenvcomm(\substenv)](\isol(\alpha , A , w) \mact \coisol(\alpha , A , w)) \mact \comm[\substenvcomm(\substenv)](\isol(\beta , A , w) \mact \coisol(\beta , A , w))
			\end{array} \right]$
		\end{tabular}
	\end{center}
\end{corollary}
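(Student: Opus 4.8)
The plan is to derive the corollary from the two parts of Lemma~\ref{lemma:comm(mact):addit}, applied in sequence. Throughout, write $\comm$ for $\comm[\substenvcomm(\substenv)]$, and recall from Figure~\ref{fig:auxfuns:substenv} that $\substenvcomm(\substenv)$ has the shape $\set{\beta_1 \rightarrow \silentact , \ldots , \beta_k \rightarrow \silentact}$, where $\set{\beta_1 , \ldots , \beta_k} = \commdom(\substenvcomm(\substenv))$. The first observation is that this same family $\beta_1 , \ldots , \beta_k$ is exactly the one appearing in the decomposition supplied by Part~1 of the lemma, so Parts~1 and~2 refer to a single, fixed list of communication left-hand sides. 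I would therefore instantiate Part~2 with $C = \substenvcomm(\substenv)$ and every $b_i = \silentact$.

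Next I would apply Part~1 of Lemma~\ref{lemma:comm(mact):addit} twice, once to $\alpha$ and once to $\beta$ (the side condition $\acts(\alpha) \cup A \subseteq \mathbb{A}$ holds by hypothesis, and we likewise assume the analogous $\acts(\beta) \cup A \subseteq \mathbb{A}$). This yields
\[
	\isol(\alpha , A , w) \mact \coisol(\alpha , A , w) \proceq \alpha_1 := \mactquant_{n_1} \beta_1 \mact \cdots \mact {} \mactquant_{n_k} \beta_k \mact \breve\alpha , \quad \breve\alpha \mactindep \beta_i ,
\]
and symmetrically $\isol(\beta , A , w) \mact \coisol(\beta , A , w) \proceq \alpha_2 := \mactquant_{m_1} \beta_1 \mact \cdots \mact {} \mactquant_{m_k} \beta_k \mact \breve\beta$ with $\breve\beta \mactindep \beta_i$. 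By construction $\alpha_1$ and $\alpha_2$ satisfy exactly the hypotheses of Part~2.

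Finally I would glue everything together using that $\proceq$ is a congruence for $\mact$ and for $\comm$, together with transitivity. Congruence turns the two Part~1 equalities into
\[
	\comm(\isol(\alpha , A , w) \mact \coisol(\alpha , A , w) \mact \isol(\beta , A , w) \mact \coisol(\beta , A , w)) \proceq \comm(\alpha_1 \mact \alpha_2)
\]
and into
\[
	\comm(\isol(\alpha , A , w) \mact \coisol(\alpha , A , w)) \mact \comm(\isol(\beta , A , w) \mact \coisol(\beta , A , w)) \proceq \comm(\alpha_1) \mact \comm(\alpha_2) ;
\]
the genuinely new step $\comm(\alpha_1 \mact \alpha_2) \proceq \comm(\alpha_1) \mact \comm(\alpha_2)$ is precisely Part~2, and transitivity connects the two displayed right-hand sides. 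The step I expect to need the most care is the bookkeeping rather than the algebra: verifying that the two invocations of Part~1 really return the \emph{same} family $\beta_1 , \ldots , \beta_k$ (namely $\commdom(\substenvcomm(\substenv))$) and that both remainders $\breve\alpha , \breve\beta$ are independent of every $\beta_i$, since this alignment is exactly what makes Part~2 applicable to $\alpha_1 \mact \alpha_2$. Once it is in place, the corollary is just ``Part~1 twice, then Part~2.''
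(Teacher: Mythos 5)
Your proposal is correct and follows essentially the same route the paper intends: the corollary is obtained by applying Lemma~\ref{lemma:comm(mact):addit}:1 to each of $\alpha$ and $\beta$ (yielding a common family $\beta_1,\ldots,\beta_k = \commdom(\substenvcomm(\substenv))$ with independent remainders), then instantiating Lemma~\ref{lemma:comm(mact):addit}:2 with $C = \substenvcomm(\substenv)$ (all $b_i = \silentact$), and gluing via congruence and transitivity of $\proceq$. Your explicit bookkeeping---that both invocations of Part~1 return the same fixed family determined by $\substenv$, and that the missing hypothesis $\acts(\beta) \subseteq \mathbb{A}$ must be assumed---matches what the paper leaves implicit.
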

\noindent Finally, Figure~\ref{fig:proof:theorem:bisect(mact):id} shows a proof of Theorem~\ref{theorem:bisect(mact):id}.

\begin{figure}[t]
	\centering
	\small
	\fbox{
		\begin{minipage}{.95\textwidth}
			Suppose $\silentfree(\alpha)$ (Prem). We proceed by structural induction on $\alpha$.
			\begin{description}
				\item[Base:]
				$\alpha = \A$. If $\A = \silent$, we get a contradition with Prem. If $\A = a$:
				\eqprooff																{ \qmark(\bisect(\alpha , A , w))}
					{\xeq{$\alpha$,$\A$,$\bisect[]$}}									{ \qmark(\isol(a , A , w) \parr \coisol(a , A , w))
				\\	\xproceq{Prem$\rightarrow$Lem.~\ref{lemma:qmark(isol,coisol):ms}}	& \qmark(\isol(a , A , w) \sync \coisol(a , A , w))
				\\	\xproceq{SMA,$\isol[]$,$\coisol[]$}									& \qmark(a \mact \substfun(a) \mact \cosubstfun(a))
				\\	\xeq{$\qmark[]$}													& \block[\substenvimg(\substenv)](\hide[\set{\silentact}](\comm[\substenvcomm(\substenv)](a \mact \substfun(a) \mact \cosubstfun(a))))
				\\	\xproceq{C1,SMA}													& \block[\substenvimg(\substenv)](\hide[\set{\silentact}](a \sync \silentact))
				\\	\xproceq{H4,H3,H2}													& \block[\substenvimg(\substenv)](a \sync \silent) \eqeqbox{\xproceq{B4,SMA}} a \mact \silent \eqeqbox{\xproceq{MA3}} a \eqeqbox{\xproceq{$\A$,$\alpha$}} \alpha}
				
				\item[Step:]
				$\alpha = \alpha_1 \mact \alpha_2$. Suppose that this proposition holds for $\alpha_1$ (IH1) and $\alpha_2$ (IH2).
				\eqproof																{ \qmark(\bisect(\alpha , A , w))
				\\	\xeq{$\alpha$,$\bisect[]$}											& \qmark(\isol(\alpha_1 \mact \alpha_2 , A , w) \parr \coisol(\alpha_1 \mact \alpha_2 , A , w))
				\\	\xproceq{Prem$\rightarrow$Lem.~\ref{lemma:qmark(isol,coisol):ms}}	& \qmark(\isol(\alpha_1 \mact \alpha_2 , A , w) \sync \coisol(\alpha_1 \mact \alpha_2 , A , w))
				\\	\xproceq{SMA,$\isol[]$,$\coisol[]$}									& \qmark(\isol(\alpha_1 , A , w) \mact \isol(\alpha_2 , A , w) \mact \coisol(\alpha_1 , A , w) \mact \coisol(\alpha_2 , A , w))
				\\	\xeq{$\qmark[]$}													& \block[\substenvimg(\substenv)](\hide[\set{\silentact}](\comm[\substenvcomm(\substenv)](
				\\																		& \quad \isol(\alpha_1 , A , w) \mact \isol(\alpha_2 , A , w) \mact \coisol(\alpha_1 , A , w) \mact \coisol(\alpha_2 , A , w))))
				\\	\xproceq{Cor.~\ref{corol:comm(mact):addit}}							& \block[\substenvimg(\substenv)](\hide[\set{\silentact}](
				\\																		& \quad \comm[\substenvcomm(\substenv)](\isol(\alpha_1 , A , w) \mact \coisol(\alpha_1 , A , w)) \mact {}
				\\																		& \quad \comm[\substenvcomm(\substenv)](\isol(\alpha_2 , A , w) \mact \coisol(\alpha_2 , A , w))))
				\\	\xproceq{SMA,B4,H4}													& \block[\substenvimg(\substenv)](\hide[\set{\silentact}](\comm[\substenvcomm(\substenv)](\isol(\alpha_1 , A , w) \mact \coisol(\alpha_1 , A , w)))) \sync {}
				\\																		& \block[\substenvimg(\substenv)](\hide[\set{\silentact}](\comm[\substenvcomm(\substenv)](\isol(\alpha_2 , A , w) \mact \coisol(\alpha_2 , A , w))))
				\\	\xproceq{$\qmark[]$,SMA}											& \qmark(\isol(\alpha_1 , A , w) \sync \coisol(\alpha_1 , A , w)) \sync \qmark(\isol(\alpha_2 , A , w) \sync \coisol(\alpha_2 , A , w))
				\\	\xproceq{Prem$\rightarrow$Lem.~\ref{lemma:qmark(isol,coisol):ms}}	& \qmark(\isol(\alpha_1 , A , w) \parr \coisol(\alpha_1 , A , w)) \sync \qmark(\isol(\alpha_2 , A , w) \parr \coisol(\alpha_2 , A , w))
				\\	\xproceq{$\bisect[]$}												& \qmark(\bisect(\alpha_1 , A , w)) \sync \qmark(\bisect(\alpha_2 , A , w)) \eqeqbox{\xproceq{IH1,IH2}} \alpha_1 \sync \alpha_2 \eqeqbox{\xproceq{SMA}} \alpha_1 \mact \alpha_2 \eqeqbox{\xeq{$\alpha$}} \alpha}
			\end{description}
		\end{minipage}
	}
	
	\caption{Proof of Theorem~\ref{theorem:bisect(mact):id}}
	\label{fig:proof:theorem:bisect(mact):id}
\end{figure}

\subsubsection{Theorems for processes}
\label{sect:split:thm:seq}

The following theorem generalizes Theorem~\ref{theorem:bisect(mact):id} from multi-actions to processes in $\seqprocuniv$: it states that splitting such a syntactically $\silent$-free process equals that process.
\begin{theorem}
	\label{theorem:bisect(seqproc):id}
	For all $\substenv = (\dot P \mapsto \dot p , \mathbb{A} , \silentact , \substfun[] , \cosubstfun[])$ such that $\acts(p) \cup A \subseteq \mathbb{A}$,
	\begin{center}
		$\big[\silentfree(p) \AND p \in \seqprocuniv \big] \IMPLIES \bisect(p , A , w) \proceq p$
	\end{center}
\end{theorem}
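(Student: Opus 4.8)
The plan is to exploit that $p \in \seqprocuniv$, so that the first clause of Definition~\ref{def:bisect} applies and $\bisect(p , A , w) = \qmark(\isol(p , A , w) \parr \coisol(p , A , w))$; it therefore suffices to prove $\qmark(\isol(p , A , w) \parr \coisol(p , A , w)) \proceq p$. Exactly as in the proof of Theorem~\ref{theorem:bisect(mact):id}, the first move is to invoke Lemma~\ref{lemma:qmark(isol,coisol):ms} (whose hypotheses $\silentfree(p)$ and $p \in \seqprocuniv$, and whose side condition $\acts(p) \cup A \subseteq \mathbb{A}$, are all inherited) to replace the interleaving composition by a synchronous one, reducing the goal to $\qmark(\isol(p , A , w) \sync \coisol(p , A , w)) \proceq p$. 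I would then argue by structural induction on $p$. The base cases are immediate: for $p = \alpha$ a multi-action this is precisely Theorem~\ref{theorem:bisect(mact):id}, whose proof already establishes the $\sync$-form, and for $p = \dead$ one uses $\isol(\dead , A , w) = \coisol(\dead , A , w) = \dead$ together with $\dead \sync \dead \proceq \dead$ and axiom Q2.

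For the choice case $p = p_1 \ch p_2$, the definitions in Figure~\ref{fig:isol,coisol} rewrite the composition with the refined strings $w\one$ and $w\two$, turning it into $(\isol(p_1 , A , w\one) \ch \isol(p_2 , A , w\two)) \sync (\coisol(p_1 , A , w\one) \ch \coisol(p_2 , A , w\two))$. Distributing $\sync$ over $\ch$ produces four summands, and axiom Q3 pushes $\qmark[]$ inside the sum. The two diagonal summands reduce to $p_1$ and $p_2$ respectively by the induction hypothesis, instantiated at the strings $w\one$ and $w\two$. The two cross summands, e.g.\ $\qmark(\isol(p_1 , A , w\one) \sync \coisol(p_2 , A , w\two))$, must collapse to $\dead$: every auxiliary action of $\isol(p_1 , A , w\one)$ is indexed by a string extending $w\one$ while every auxiliary action of $\coisol(p_2 , A , w\two)$ is indexed by a string extending $w\two$, so by injectivity and image-disjointness of $\substfun[]$ and $\cosubstfun[]$ (Definition~\ref{def:substenv}) no communication of $\substenvcomm(\substenv)$ is ever enabled and every non-$\dead$ leading multi-action carries an unpaired $\substfun[]$- or $\cosubstfun[]$-action, which $\block[\substenvimg(\substenv)]$ inside $\qmark[]$ equates to $\dead$ — the very blocking mechanism used in Lemma~\ref{lemma:qmark(isol,coisol):ms}. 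Since $x \ch \dead \proceq x$, the four summands collapse to $p_1 \ch p_2$. This case is where the string bookkeeping $w$ earns its keep.

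For the sequential case $p = p_1 \seq p_2$, the functions $\isol[]$ and $\coisol[]$ leave $w$ unchanged, so the composition is $(\isol(p_1 , A , w) \seq \isol(p_2 , A , w)) \sync (\coisol(p_1 , A , w) \seq \coisol(p_2 , A , w))$. The heart of the argument, and what I expect to be the main obstacle, is a distributivity step showing that, modulo the communication–hiding–blocking packaged in $\qmark[]$, this synchronous composition of sequential compositions equals $(\isol(p_1 , A , w) \sync \coisol(p_1 , A , w)) \seq (\isol(p_2 , A , w) \parr \coisol(p_2 , A , w))$; a further appeal to Lemma~\ref{lemma:qmark(isol,coisol):ms} turns the residual $\parr$ into $\sync$, axiom Q4 extracts the outer $\qmark[]$, and the induction hypothesis then closes the case. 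Intuitively this says that the only surviving behaviour is the lockstep one: an isolation step of $p_1$ can synchronise only with the matching coisolation step of $p_1$, never with a coisolation step of $p_2$ lying across the sequential boundary, since skipping $\coisol(p_1 , A , w)$ would again leave unpaired auxiliary actions to be blocked. Making this precise requires a merge-type expansion (the merge axioms M and A6–A7 together with the blocking of $\qmark[]$) that generalises the single-leading-multi-action argument of Lemma~\ref{lemma:qmark(isol,coisol):ms} to an entire prefix, and I would isolate it as a separate lemma rather than inline it.

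Finally, should $\seqprocuniv$ admit references, the case $p = \dot P$ is not closed by structural induction but by the recursion specification principle for guarded recursion under strong bisimulation: after the reductions above, $\refbisect(\dot P , A , w)$ satisfies the same guarded equation as $\dot P$, and the prohibition of mutual recursion guarantees that a single such equation suffices, whence the two processes are bisimilar. In summary, the routine work lies in the reduction via Lemma~\ref{lemma:qmark(isol,coisol):ms} and in the choice case, while the genuine difficulty is the sequential distributivity step, where one must show that the matched auxiliary actions forbid every cross-boundary synchronization.
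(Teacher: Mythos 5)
Your proposal is correct and follows essentially the same route as the paper's proof: reduce $\parr$ to $\sync$ via Lemma~\ref{lemma:qmark(isol,coisol):ms}, proceed by structural induction, handle choice via S7 and Q3 with the diagonal summands closed by the induction hypothesis at $w\one$, $w\two$ and the cross summands collapsing to $\dead$, and isolate the sequential-case distributivity as a separate lemma. Your inline blocking argument for the cross summands is precisely the content of the paper's Lemma~\ref{lemma:qmark(isol(seqproc)|coisol(seqproc)):addit} and Lemma~\ref{lemma:qmark(isol(seqproc)):dead}, and your proposed separate lemma for the sequential case is the paper's Lemma~\ref{lemma:qmark(seq,sync,seq):argswap} (stated there with $\parr$ rather than $\sync$, equivalent modulo Lemma~\ref{lemma:qmark(isol,coisol):ms}); the reference case you hedge on is not needed, since the paper treats references only in Theorem~\ref{theorem:bisect(proc):id}.
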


As for Theorem~\ref{theorem:bisect(mact):id}, we need some auxiliary lemmas to prove this theorem.  We formulate these lemmas below; proofs, as well as additional propositions on which these proofs rely, appear in \ifthenelse{\boolean{isTechReport}}{Section~\ref{sect:proofs}}{\cite{JCP12}}. The first lemma states the additivity property $\qmark(r_1 \sync r_2) = \qmark(r_1) \sync \qmark(r_2)$ when $r_1$ and $r_2$ denote the isolation and the coisolation of the processes $p$ and $q$. Importantly, while $p$ and $q$ may denote the same process, their isolation and coisolation must involve different strings over $\set{\one , \two}$ for the additivity to hold.
\begin{lemma}
	\label{lemma:qmark(isol(seqproc)|coisol(seqproc)):addit}
	For all $\substenv = (\dot P \mapsto \dot p , \mathbb{A} , \silentact , \substfun[] , \cosubstfun[])$ such that $\acts(p) \cup \acts(q) \cup A \subseteq \mathbb{A}$,
	\begin{center}
		\begin{tabular}[t]{@{}l@{}}
			$\big[ \silentfree(p) \AND \silentfree(q) \AND p , q \in \seqprocuniv \AND w \neq v \big] \IMPLIES$
		\\	$\quad \qmark(\isol(p , A , w) \sync \coisol(q , A , v)) \proceq \qmark(\isol(p , A , w)) \sync \qmark(\coisol(q , A , v))$
		\end{tabular}
	\end{center}
\end{lemma}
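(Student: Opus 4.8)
The plan is to unfold $\qmark[]$ on both sides using \defref{qmark} and reduce the claim to the statement that the communication, hiding, and blocking operators distribute over the synchronous merge of $\isol(p,A,w)$ and $\coisol(q,A,v)$. Since $p,q\in\seqprocuniv$, I would argue by induction on the combined syntactic structure of $p$ and $q$, so that at the leaves (where both are multi-actions) I can fall back on the same reasoning used in Theorem~\ref{theorem:bisect(mact):id}: additivity of $\comm[]$ from \lemmaref{comm(mact):addit}{} and Corollary~\ref{corol:comm(mact):addit}, together with the axioms SMA, B4 and H4. The synchronous merge itself is controlled through its axioms: it distributes over $\ch$, and on leading multi-actions it collapses to $\mact$ via SMA, so $\isol(p,A,w)\sync\coisol(q,A,v)$ is governed by a single combined first multi-action followed by a parallel composition of the two remainders.

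The technical core is a \emph{no-cross-communication} claim: no auxiliary action produced by $\isol(p,A,w)$ is the communication partner of any auxiliary action produced by $\coisol(q,A,v)$. I would establish this from the definitions in Figure~\ref{fig:isol,coisol} and \defref{substenv}. Isolation tags every $a\in A$ with $\substfun(a)$ and every $b\notin A$ with $\cosubstfun(b)$, whereas coisolation does exactly the reverse; hence a partner pair $\substfun[](a,u)\mact\cosubstfun[](a,u)$ straddling the two processes would force its two string arguments, which extend $w$ and $v$ respectively, to coincide. Because $w\neq v$, because $\substfun[]$ and $\cosubstfun[]$ are injective, and because $\isol[]$ and $\coisol[]$ only ever append to their initial strings, no such common string arises, so $\comm[\substenvcomm(\substenv)]$ fires no communication between the two sides. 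Consequently $\comm[\substenvcomm(\substenv)]$ acts additively on the combined multi-action, and SMA, B4 and H4 let me push $\hide[]$ and $\block[]$ through the merge; refolding the two halves into separate $\qmark[]$ applications then yields the right-hand side.

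For the inductive step I would exploit that $\isol[]$ and $\coisol[]$ split a choice $p_1\ch p_2$ by recursing on $w\one$ and $w\two$ (respectively $v\one$, $v\two$), so that the distribution of $\comm[]$, $\hide[]$, $\block[]$ and $\sync$ over $\ch$, and the analogous behaviour on $\seq$, reduce each case to smaller subprocesses while preserving the hypothesis $w\neq v$ on every pair of strings actually brought into contact; references are discharged by the recursion principle underlying $\refbisect[]$. I expect the main obstacle to be precisely this no-cross-communication bookkeeping: turning the informal ``different strings'' argument into a rigorous one requires controlling, for every multi-action the merge can form, the exact string arguments attached to its auxiliary actions on each side and verifying that $w\neq v$ genuinely prevents any pair from communicating. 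This is also the step that explains why the hypothesis $w\neq v$ — rather than merely $p\neq q$ — is indispensable.
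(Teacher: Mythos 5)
Your overall architecture is exactly the paper's: reduce the claim to a \emph{no-cross-communication} property (no rule in $\substenvcomm(\substenv)$ pairs an auxiliary action of $\isol(p,A,w)$ with one of $\coisol(q,A,v)$), then distribute $\comm[\substenvcomm(\substenv)]$, $\hide[\set{\silentact}]$ and $\block[\substenvimg(\substenv)]$ over $\sync$ via SMA, B4 and H4, with an induction over the structure of $p$ and $q$. The gap is in the step you yourself flag as the main obstacle: your justification of no-cross-communication is invalid. You argue that a straddling pair $\substfun[](a,u) \mact \cosubstfun[](a,u)$ would force a string extending $w$ to equal a string extending $v$, and that $w \neq v$ forbids this. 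It does not: extensions of distinct strings coincide whenever one of $w,v$ is a proper prefix of the other. Concretely, take $a, a' \in A$, $p = a \ch a'$, $q = a$, $w = \one$, $v = \one\one$. Then $\isol(p,A,w) = (a \mact \substfun[\one\one](a)) \ch (a' \mact \substfun[\one\two](a'))$, since the left branch of the choice recurses on $w\one = \one\one$, while $\coisol(q,A,v) = \cosubstfun[\one\one](a)$; the communication $\substfun[\one\one](a) \mact \cosubstfun[\one\one](a) \rightarrow \silentact$ now fires \emph{across} the two factors. By S7, SMA and Q3 the left-hand side of the lemma has the summand $\qmark(a \mact \substfun[\one\one](a) \mact \cosubstfun[\one\one](a)) \proceq a$ (communication, then hiding, then blocking leave $a$), so it is $\proceq a \ch \dead \proceq a$, whereas the right-hand side is $\proceq \dead \sync \dead \proceq \dead$ by \lemmaref{qmark(isol(seqproc)):dead}{} and S4. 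Note also that your inductive invariant ``$w \neq v$ is preserved on every pair of strings brought into contact'' already breaks here: recursing into the left branch of $p$ brings the pair $(w\one, v) = (\one\one,\one\one)$ into contact.

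What the argument actually needs is that neither of $w,v$ is a prefix of the other: prefix-incomparability \emph{is} preserved when $\isol[]$ and $\coisol[]$ append $\one$ or $\two$, and it genuinely rules out a common extension, after which your distribution of $\comm[]$, $\hide[]$ and $\block[]$ goes through. In fairness, the paper's own proof sketch makes the same leap from ``disjoint images and $w \neq v$,'' so the hypothesis of the lemma is too weak as stated; but in its only application---the case $p = p_1 \ch p_2$ in the proof of Theorem~\ref{theorem:bisect(seqproc):id}, where the strings are $w\one$ and $w\two$---incomparability does hold, so the development is repaired by strengthening the hypothesis. As written, however, your key claim (``no such common string arises'') is not merely unproven: under the stated hypothesis $w \neq v$ it is false, and so is the lemma.
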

\begin{proof}
	[Proof (sketch)]
	The actions occurring in $\isol(p , A , w)$ differ from those occurring in $\isol(q , A , v)$ (except for the original actions in $p$ and $q$) because $\substfun[]$ and $\cosubstfun[]$ have disjoint images by their definition \emph{and} because $w \neq v$. In that case, there exists no communication in $\isol(p , A , w) \sync \coisol(p , A , w)$ that did not exist already in $\isol(p , A , w)$ or in $\coisol(p , A , w)$, enabling one to distribute $\comm[]$ in $\qmark[]$ among them. We can do the same for $\hide[]$ and $\block[]$ in $\qmark[]$ (by B4 and H4). See \ifthenelse{\boolean{isTechReport}}{Section~\ref{sect:proofs:lemma:qmark(isol(seqproc)|coisol(seqproc)):addit}}{\cite{JCP12}} for a detailed proof.
\end{proof}
\noindent The following lemma states that the process $\qmark(r)$ deadlocks when $r$ denotes only the isolation or only the coisolation of a process $p$.
\begin{lemma}
	\label{lemma:qmark(isol(seqproc)):dead}
	For all $\substenv = (\dot P \mapsto \dot p , \mathbb{A} , \silentact , \substfun[] , \cosubstfun[])$ such that $\acts(p) \cup A \subseteq \mathbb{A}$,
	\begin{center}
		$\big[ \silentfree(p) \AND p \in \seqprocuniv \big] \IMPLIES \big[ \qmark(\isol(p , A , w)) \proceq \dead \AND \qmark(\coisol(p , A , w)) \proceq \dead \big]$
	\end{center}
\end{lemma}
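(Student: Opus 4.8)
The plan is to prove both conjuncts by structural induction on $p \in \seqprocuniv$, treating the isolation and coisolation claims as mirror images and carrying both induction hypotheses along. The two compound cases reduce entirely to pushing $\qmark[]$ inward using the axioms of Figure~\ref{fig:axioms:qmark}. For $p = p_1 \seq p_2$ I would rewrite $\isol(p , A , w)$ as $\isol(p_1 , A , w) \seq \isol(p_2 , A , w)$ and apply Q4 to get $\qmark(\isol(p_1 , A , w)) \seq \qmark(\isol(p_2 , A , w))$; the induction hypothesis collapses the left factor to $\dead$, and A7 gives $\dead \seq \qmark(\isol(p_2 , A , w)) \proceq \dead$. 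For $p = p_1 \ch p_2$, the definition of $\isol[]$ produces the strings $w\one$ and $w\two$, so Q3 yields $\qmark(\isol(p_1 , A , w\one)) \ch \qmark(\isol(p_2 , A , w\two))$; since the statement is universally quantified over the string argument, the induction hypothesis applies to both summands and A6 finishes $\dead \ch \dead \proceq \dead$. The case $p = \dead$ is immediate from Q2.

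All the genuine content therefore lives in the base case $p = \alpha$, a syntactically $\silent$-free multi-action, for which $\isol(\alpha , A , w)$ is again a single multi-action, say $\alpha'$. Here I would first record two structural facts by a side induction on $\alpha$ over $\mact$: (i) $\alpha'$ contains at least one action of $\substenvimg(\substenv)$, because every action $a$ of $\alpha$ contributes either $\substfun(a)$ (when $a \in A$) or the fresh action $\cosubstfun(a)$ (when $a \notin A$), and $\alpha$ is nonempty since it is $\silent$-free; and (ii) $\alpha'$ never contains both $\substfun(a)$ and $\cosubstfun(a)$ for a single $a$, since in $\isol[]$ the action $\substfun(a)$ is emitted only when $a \in A$ whereas $\cosubstfun(a)$ is emitted only when $a \notin A$, and $\substfun[] , \cosubstfun[]$ have disjoint images by \defref{substenv}.

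With (i) and (ii) in hand I would evaluate $\qmark(\alpha') = \block[\substenvimg(\substenv)](\hide[\set{\silentact}](\comm[\substenvcomm(\substenv)](\alpha')))$ from the inside out. By (ii) no left-hand side $\substfun(a) \mact \cosubstfun(a)$ of a rule in $\substenvcomm(\substenv)$ occurs as a sub-multi-action of $\alpha'$, so $\comm[\substenvcomm(\substenv)]$ leaves $\alpha'$ unchanged; and since $\silentact \notin \acts(\alpha')$ (the image actions avoid $\silentact$ by \defref{substenv}, and the original actions lie in $\mathbb{A}$ while $\silentact \notin \mathbb{A}$), the hiding operator is likewise the identity. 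Finally, using (i) to exhibit an action $f \in \substenvimg(\substenv)$ inside $\alpha'$, applying B4 to distribute the blocking and SMA to absorb the resulting $\dead$ (exactly as in the sketch of \lemmaref{qmark(isol,coisol):ms}{}) gives $\block[\substenvimg(\substenv)](\alpha') \proceq \dead$, whence $\qmark(\alpha') \proceq \dead$.

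The coisolation conjunct is identical after swapping the roles of $\substfun[]$ and $\cosubstfun[]$: now $\coisol(\alpha , A , w)$ emits $\cosubstfun(a)$ for $a \in A$ and the fresh $\substfun(a)$ for $a \notin A$, so (i) and (ii) hold verbatim and the same computation yields $\dead$. I expect the main obstacle to be making the two structural facts fully rigorous — in particular pinning down that $\comm[\substenvcomm(\substenv)]$ never fires, which rests on the interplay between the membership test $a \in A$ versus $a \notin A$ and the image-disjointness of \defref{substenv}; once that invariant is in place, the surrounding equational reasoning is routine.
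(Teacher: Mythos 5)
Your proposal is correct and takes essentially the same approach as the paper: the paper's proof is a sketch referring back to the argument for \lemmaref{qmark(isol,coisol):ms}{}, namely that every multi-action of $\isol(p,A,w)$ or $\coisol(p,A,w)$ taken alone contains an unpaired action of $\substenvimg(\substenv)$, which the blocking operator inside $\qmark[]$ equates to $\dead$, after which A7 (and A6) absorb the remainder---exactly the structure of your induction with Q2--Q4. Your explicit verification that $\comm[\substenvcomm(\substenv)]$ never fires and that $\hide[\set{\silentact}]$ acts as the identity is precisely the detail the paper's sketch compresses into ``the injectivity and image-disjointness of $\substfun[]$ and $\cosubstfun[]$.''
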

\begin{proof}
	[Proof (sketch)]
	Similar to the proof sketch of Lemma~\ref{lemma:qmark(isol,coisol):ms}. See \ifthenelse{\boolean{isTechReport}}{Section~\ref{sect:proofs:lemma:qmark(isol(seqproc)):dead}}{\cite{JCP12}} for a detailed proof.
\end{proof}

Suppose that we have two sequential processes, namely $r_1 = \isol(p , A , w) \seq \isol(q , A , w)$ and $r_2 = \coisol(p , A , w) \seq \coisol(q , A , w)$. Moreover, suppose that we take their parallel composition $r_1 \parr r_2$. Our final lemma states that instead of taking this parallel composition, one can compose the parallel composition $r^\sharp = \isol(p , A , w) \sync \coisol(p , A , w)$ and the parallel composition $r^\flat = \isol(q , A , w) \sync \coisol(q , A , w)$ sequentially and get the same process. In short: $r^\sharp \seq r^\flat$ equals $r_1 \parr r_2$.
\begin{lemma}
	\label{lemma:qmark(seq,sync,seq):argswap}
	For all $\substenv = (\dot P \mapsto \dot p , \mathbb{A} , \silentact , \substfun[] , \cosubstfun[])$ such that $\acts(p) \cup A \subseteq \mathbb{A}$,
	\begin{center}
		\begin{tabular}[t]{@{}l@{}}
			$\big[\silentfree(p) \AND p \in \seqprocuniv \big] \IMPLIES$
		\\	$\quad \qmark((\isol(p , A , w) \seq \isol(q , A , w)) \parr {} (\coisol(p , A , w) \seq \coisol(q , A , w))) \proceq {}$
		\\	$\quad \quad \qmark(\isol(p , A , w) \parr \coisol(p , A , w) \seq \isol(q , A , w) \parr \coisol(q , A , w))$
		\end{tabular}
	\end{center}
\end{lemma}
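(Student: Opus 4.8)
The plan is to rewrite both sides into the common normal form $\qmark(\isol(p,A,w)\parr\coisol(p,A,w))\seq\qmark(\isol(q,A,w)\parr\coisol(q,A,w))$. Reading the right-hand side with its intended bracketing $(\isol(p,A,w)\parr\coisol(p,A,w))\seq(\isol(q,A,w)\parr\coisol(q,A,w))$, axiom Q4 (which lets $\qmark[]$ distribute over $\seq$) produces this normal form at once, so all the work lies in the left-hand side $\qmark(r_1\parr r_2)$ with $r_1=\isol(p,A,w)\seq\isol(q,A,w)$ and $r_2=\coisol(p,A,w)\seq\coisol(q,A,w)$. Since $\isol[]$ and $\coisol[]$ agree action-by-action only when applied to the same block under the same string, I will proceed by induction on the structure of $p$, proving the marginally stronger claim in which the $p$-phase and the $q$-phase carry \emph{independent} strings (the stated lemma being the case where both equal $w$); this strengthening is what the choice case forces, while isolation and coisolation still only ever need to agree with each other phase-by-phase.

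Two reductions carry the argument. First, whenever $x$ and $y$ are the isolation and the coisolation of one and the same $\silentfree$ sequential block, $\qmark(x\parr y)\proceq\qmark(x\sync y)$: expanding $\parr$ by axiom M, distributing $\qmark[]$ over the resulting $\ch$ by Q3, and deleting $\dead$ summands by A6, the two left-merge summands each begin with a multi-action contributed by a single side and therefore carrying some unpaired $\substfun(a)$ or $\cosubstfun(a)$; communication cannot consume it, so the blocking operator inside $\qmark[]$ sends that head to $\dead$ and, by Q4 and A7, the whole summand to $\dead$. This is precisely the mechanism of Lemma~\ref{lemma:qmark(isol,coisol):ms} and Lemma~\ref{lemma:qmark(isol(seqproc)):dead}, which I invoke directly. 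Second, once a synchronous composition is reached, the communication-merge axiom for sequential terms peels a synchronized head, $(\isol(\ALPHA,A,w)\seq x)\sync(\coisol(\ALPHA,A,w)\seq y)\proceq(\isol(\ALPHA,A,w)\sync\coisol(\ALPHA,A,w))\seq(x\parr y)$, trading a synchronous head for a parallel tail.

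The induction then peels one leading multi-action at a time. After bringing $\isol(p,A,w)$ and $\coisol(p,A,w)$ to matching head-normal forms (a choice over summands of the shape ``head $\seq$ tail'', one per choice-path through $p$, with the strings fixed by that path), the first reduction turns $r_1\parr r_2$ into $r_1\sync r_2$, $\sync$ distributes over the two sums, and each pair of summands is examined. A diagonal pair (same choice-path, hence matching strings) is peeled by the second reduction into a synchronized head $\isol(\ALPHA,A,w')\sync\coisol(\ALPHA,A,w')$ followed by a parallel tail, whose tail is again a parallel composition of matched isolation/coisolation blocks of strictly smaller size followed by the $q$-phase, to which the induction hypothesis applies. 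An off-diagonal pair draws its two heads from different choice-paths, so by the injectivity and image-disjointness of $\substfun[]$ and $\cosubstfun[]$ together with the distinct strings $w\one\neq w\two$ the synchronized head again carries unpaired fresh actions and is blocked to $\dead$ --- the same observation underlying Lemma~\ref{lemma:qmark(isol(seqproc)|coisol(seqproc)):addit}. The surviving diagonal contributions match, term by term, the expansion obtained by applying Q4 and the first reduction to $\isol(p,A,w)\parr\coisol(p,A,w)$ in the normal form; since both expansions range over the same choice-paths with the same synchronized heads and tails (Lemma~\ref{lemma:qmark(isol,coisol):ms} read backwards converting each $\isol(\ALPHA,A,w')\sync\coisol(\ALPHA,A,w')$ to $\parr$-form), the two sides coincide. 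The base case bottoms out at a bare leading multi-action, the degenerate subcase $p=\dead$ collapsing both sides to $\dead$ through A7 and Q2.

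I expect the choice case to be the main obstacle, for two entangled reasons. It is what forces the independent-string strengthening: the surviving diagonal summand for, say, the first branch pairs $\isol(p_1,A,w\one)$ with the $q$-phase $\isol(q,A,w)$, so the induction hypothesis must tolerate a $p$-phase string $w\one$ different from the $q$-phase string $w$. Moreover, the head of an isolation is itself a choice when the block begins with $\ch$, so peeling it cleanly requires the matching head-normal form and a careful check that the string discipline keeps every diagonal pair synchronizable and every off-diagonal pair blocked --- i.e.\ that no spurious communication survives between differently-indexed fresh actions. Verifying this bookkeeping is the delicate part; the two reductions and the sequential peeling are comparatively mechanical. (Throughout I assume, as the intended reading requires, that $\silentfree(q)$ holds and $q\in\seqprocuniv$, so that Lemma~\ref{lemma:qmark(isol,coisol):ms} applies to the $q$-phase as well.)
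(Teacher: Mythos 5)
Your mechanism is the right one --- it is exactly the lock-step argument the paper sketches: the blocking inside $\qmark[]$ kills every lone move and every mismatched synchronized head, so only phase-aligned executions survive --- and you correctly spot the intended bracketing of the right-hand side, the tacit hypotheses $\silentfree(q)$ and $q \in \seqprocuniv$, and the fact that the choice case forces an induction hypothesis with independent strings. The gap is that the strengthening you fixed on (exactly two phases, with independent strings) is not closed under your own peeling step, so the sentence ``to which the induction hypothesis applies'' fails. Take the strengthened claim at $p = (a \seq d \ch b) \seq c$ with $p$-phase string $u$ and $q$-phase string $v$, $u \neq v$ (this instance is forced from the original equal-string lemma by one application of your choice case, e.g.\ starting from $((a \seq d \ch b) \seq c) \ch e$). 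Here $\isol(p , A , u) = ((\isol(a , A , u\one) \seq \isol(d , A , u\one)) \ch \isol(b , A , u\two)) \seq \isol(c , A , u)$; after distributing with A4 and peeling the head $\isol(a , A , u\one) \sync \coisol(a , A , u\one)$ of the first diagonal summand via S6, the remaining parallel tail is $(\isol(d , A , u\one) \seq \isol(c , A , u) \seq \isol(q , A , v)) \parr (\coisol(d , A , u\one) \seq \coisol(c , A , u) \seq \coisol(q , A , v))$: \emph{three} phases with pairwise distinct strings $u\one$, $u$, $v$. This is not an instance of your hypothesis, and it cannot be regrouped into one: since $\isol[]$ and $\coisol[]$ propagate a single string across $\seq$, the last two phases merge into a single isolation only if $u = v$, and the first two only if $u\one = u$ --- both false.

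The repair stays inside your framework but needs the hypothesis quantified over arbitrary matched concatenations rather than over a second string: prove, by structural induction on $p$, that $\qmark((\isol(p , A , u) \seq X) \parr (\coisol(p , A , u) \seq Y)) \proceq \qmark(\isol(p , A , u) \parr \coisol(p , A , u)) \seq \qmark(X \parr Y)$ for every pair $X = \isol(q_1 , A , v_1) \seq \cdots \seq \isol(q_n , A , v_n)$ and $Y = \coisol(q_1 , A , v_1) \seq \cdots \seq \coisol(q_n , A , v_n)$ with each $q_i$ silent-free and sequential and the $v_i$ arbitrary. This version is inductive: in the sequential case $p = p_1 \seq p_2$ you absorb $\isol(p_2 , A , u)$ into $X$ and apply the hypothesis for $p_1$ and then for $p_2$; in the choice case you argue exactly as you do, instantiating the hypothesis at $u\one$ and $u\two$; your bare-head and $\dead$ base cases are unchanged. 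Correspondingly, your two blocking reductions must also be stated for such matched concatenations: Lemma~\ref{lemma:qmark(isol,coisol):ms} and Lemma~\ref{lemma:qmark(isol(seqproc)):dead} as written apply only to the isolation and coisolation of a single block under a single string, so they cannot be ``invoked directly'' once the phases carry different strings --- the unpaired-action argument goes through verbatim, but the statements you need are strictly more general than the ones the paper provides.
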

\begin{proof}
	[Proof (sketch)]
	The processes $\isol(p , A , w)$ and $\coisol(p , A , w)$ always stay ``synchronized'' when composed in parallel due to the $\qmark[]$ operator. This implies that these processes ``finish'' at the same time. Consequently, $\isol(q , A , w)$ and $\coisol(q , A , w)$ start at the same time, which implies the desired result. See \ifthenelse{\boolean{isTechReport}}{Section~\ref{sect:proofs:lemma:qmark(seq,sync,seq):argswap}}{\cite{JCP12}} for a detailed proof.
\end{proof}
\noindent Finally, Figure~\ref{fig:proof:theorem:bisect(seqproc):id} shows a proof of Theorem~\ref{theorem:bisect(seqproc):id}. Our last theorem generalizes Theorem~\ref{theorem:bisect(seqproc):id} from sequential processes to parallel processes; Figure~\ref{fig:proof:theorem:bisect(proc):id} shows a proof.

\begin{figure}[t]
	\centering
	\small
	\fbox{
		\begin{minipage}{.95\textwidth}
			Suppose $\big[ \silentfree(p) \AND p \in \seqprocuniv \big]$ (Prem). We have:
			\begin{center}
				$\bisect(p , A , w) \eqeqbox{\xeq{$\bisect[]$}} \qmark(\isol(p , A , w) \parr \coisol(p , A , w)) \eqeqbox{\xproceq{Prem$\rightarrow$Lem.~\ref{lemma:qmark(isol,coisol):ms}}} \qmark(\isol(p , A , w) \sync \coisol(p , A , w))$
			\end{center}
			Denote this property by Obs. We proceed by structural induction on $p$.
			\begin{description}
				\item[Base:]
				$p = \ALPHA$. If $\ALPHA = \alpha$, the theorem follows by Theorem~\ref{theorem:bisect(mact):id}. If $\ALPHA = \dead$:
				\begin{center}
					$\bisect(p , A , w) \eqeqbox{\xeq{Obs,$p$}} \qmark(\isol(\dead , A , w) \sync \coisol(\dead , A , w)) \eqeqbox{\xeq{$\isol[]$,$\coisol[]$}} \qmark(\dead \sync \dead) \eqeqbox{\xproceq{S4,Q2}} \dead \eqeqbox{\xeq{$\ALPHA$,$p$}} p$
				\end{center}
				
				\item[Step:]
				$p = p_1 \binop p_2$ with $\binop \in \set{\ch , \seq}$. Suppose that this theorem holds for $p_1$ (IH1) and $p_2$ (IH2).
				\begin{description}
					\item[Case:]
					$p = p_1 \ch p_2$.
					\eqproof																					{ \bisect(p , A , w)
					\\	\xproceq{Obs,$p$}																		& \qmark(\isol(p_1 \ch p_2 , A , w) \sync \coisol(p_1 \ch p_2 , A , w))
					\\	\xeq{$\isol[]$,$\coisol[]$}																& \qmark((\isol(p_1 , A , w\one) \ch \isol(p_2 , A , w\two)) \sync (\coisol(p_1 , A , w\one) \ch \coisol(p_2 , A , w\two)))
					\\	\xproceq{S7}																			& \qmark(\isol(p_1 , A , w\one) \sync \coisol(p_1 , A , w\one) \ch \isol(p_1 , A , w\one) \sync \coisol(p_2 , A , w\two) \ch {}
					\\																							& \quad \isol(p_2 , A , w\two) \sync \coisol(p_1 , A , w\one) \ch \isol(p_2 , A , w\two) \sync \coisol(p_2 , A , w\two))
					\\	\xproceq{Q3}																			& \qmark(\isol(p_1 , A , w\one) \sync \coisol(p_1 , A , w\one)) \ch \qmark(\isol(p_1 , A , w\one) \sync \coisol(p_2 , A , w\two)) \ch {}
					\\																							& \qmark(\isol(p_2 , A , w\two) \sync \coisol(p_1 , A , w\one)) \ch \qmark(\isol(p_2 , A , w\two) \sync \coisol(p_2 , A , w\two))
					\\	\xproceq{Obs}																			& \qmark(\bisect(p_1 , A , w\one)) \ch \qmark(\isol(p_1 , A , w\one) \sync \coisol(p_2 , A , w\two)) \ch {}
					\\																							& \qmark(\isol(p_2 , A , w\two) \sync \coisol(p_1 , A , w\one)) \ch \bisect(p_2 , A , w\two)
					\\	\xproceq{Prem$\rightarrow$Lem.~\ref{lemma:qmark(isol(seqproc)|coisol(seqproc)):addit}}	& \bisect(p_1 , A , w\one) \ch \qmark(\isol(p_1 , A , w\one)) \sync \qmark(\coisol(p_2 , A , w\two)) \ch {}
					\\																							& \qmark(\isol(p_2 , A , w\two)) \sync \qmark(\coisol(p_1 , A , w\one)) \ch \bisect(p_2 , A , w\two)
					\\	\xproceq{Prem$\rightarrow$Lem.~\ref{lemma:qmark(isol(seqproc)):dead}}					& \bisect(p_1 , A , w\one) \ch \dead \sync \dead \ch \dead \sync \dead \ch \bisect(p_2 , A , w\two) \eqeqbox{\xproceq{IH1,IH2,S4,A6}} p_1 \ch p_2 \eqeqbox{\xproceq{$p$}} p}
					
					\item[Case:]
					$p = p_1 \seq p_2$.
					\eqproof																	{ \bisect(p , A , w)
					\\	\xeq{$p$,$\bisect[]$}													& \qmark(\isol(p_1 \seq p_2 , A , w) \parr \coisol(p_1 \seq p_2 , A , w))
					\\	\xeq{$\isol[]$,$\coisol[]$}												& \qmark((\isol(p_1 , A , w) \seq \isol(p_2 , A , w)) \parr (\coisol(p_1 , A , w) \seq \coisol(p_2 , A , w)))
					\\	\xproceq{Prem$\rightarrow$Lem.~\ref{lemma:qmark(seq,sync,seq):argswap}}	& \qmark(\isol(p_1 , A , w) \parr \coisol(p_1 , A , w) \seq \isol(p_2 , A , w) \parr \coisol(p_2 , A , w))
					\\	\xproceq{Q4}															& \qmark(\isol(p_1 , A , w) \parr \coisol(p_1 , A , w)) \seq \qmark(\isol(p_2 , A , w) \parr \coisol(p_2 , A , w))
					\\	\xproceq{$\bisect[]$}													& \bisect(p_1 , A , w) \seq \bisect(p_2 , A , w) \eqeqbox{\xproceq{IH1,IH2}} p_1 \seq p_2 \eqeqbox{\xproceq{$p$}} p}
				\end{description}
			\end{description}
		\end{minipage}
	}
	
	\caption{Proof of Theorem~\ref{theorem:bisect(seqproc):id}.}
	\label{fig:proof:theorem:bisect(seqproc):id}
\end{figure}

\begin{theorem}
	\label{theorem:bisect(proc):id}
	For all $\substenv = (\dot P \mapsto \dot p , \mathbb{A} , \silentact , \substfun[] , \cosubstfun[])$ such that $\acts(p) \cup A \subseteq \mathbb{A}$,
	\begin{center}
		$\silentfree(\dot p) \IMPLIES \bisect(\dot p , A , w) \proceq \dot p$
	\end{center}
\end{theorem}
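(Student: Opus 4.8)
The plan is to prove Theorem~\ref{theorem:bisect(proc):id} by structural induction on the general process term $\dot p$, following exactly the clauses of Definition~\ref{def:bisect}. The base case is $\dot p \in \seqprocuniv$, which is precisely Theorem~\ref{theorem:bisect(seqproc):id}. The induction hypotheses transfer to immediate subterms without trouble: $\silentfree(\dot p)$ implies $\silentfree$ of every subterm because $\silent$-freeness is syntactic, and $\acts$ of a subterm is contained in $\acts(\dot p) \subseteq \mathbb{A}$, so the side conditions of Definition~\ref{def:bisect} and of the theorem persist downward.

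The operator cases are routine given that $\proceq$ (strong bisimulation) is a congruence for every operator of the algebra. If $\dot p = p_1 \binop p_2$ with $\binop \in \set{\seq , \ch , \parr , \lmerge , \sync}$ and $\dot p \notin \seqprocuniv$, then Definition~\ref{def:bisect} gives $\bisect(\dot p , A , w) = \bisect(p_1 , A , w) \binop \bisect(p_2 , A , w)$; applying the induction hypotheses $\bisect(p_i , A , w) \proceq p_i$ and congruence yields $\bisect(\dot p , A , w) \proceq p_1 \binop p_2 = \dot p$. The unary case $\dot p = \fun(p_1)$ with $\fun \in \set{\restr , \block , \rename , \comm , \hide}$ is identical, using $\bisect(\fun(p_1) , A , w) = \fun(\bisect(p_1 , A , w))$. (Cases $\binop \in \set{\seq , \ch}$ arise here only when $\dot p \notin \seqprocuniv$; the sequential subcase is already absorbed by the base case.)

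The main obstacle is the process-reference case $\dot p = \dot P$, where naive structural induction breaks down: $\bisect(\dot P , A , w) = \refbisect(\dot P , A , w)$ is a reference to the fresh definition $\refbisect(\dot P) \mapsto \bisect(\dot p , A , w)$, so showing $\refbisect(\dot P) \proceq \dot P$ is circular, as $\dot P$ is a leaf of the induction. I would close this loop with the recursive specification principle (RSP): a guarded recursive specification has at most one solution up to $\proceq$. First I would establish the auxiliary ``single-unfolding'' fact $\dot p \proceq \bisect(\dot p , A , w)[\dot P / \refbisect(\dot P)]$ by a \emph{separate} structural induction on $\dot p$ in which reference leaves are trivial (both sides reduce to $\dot P$), sequential leaves are handled by Theorem~\ref{theorem:bisect(seqproc):id}, and operator nodes by congruence; crucially this induction does not presuppose the statement it supports, so no circularity arises. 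This fact shows that both $\dot P$ (via the definitional equivalence $\dot P \proceq \dot p$) and $\refbisect(\dot P)$ (via unfolding its own definition) solve the guarded equation $X \proceq \bisect(\dot p , A , w)[X / \refbisect(\dot P)]$, whence RSP gives $\refbisect(\dot P) \proceq \dot P$, i.e.\ $\bisect(\dot P , A , w) \proceq \dot P$.

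For RSP to apply I must argue that the bisected specification is guarded. This follows from the observation already recorded in Section~\ref{sect:split}: since $\isol[]$ and $\coisol[]$, and hence $\bisect[]$, leave the transition-system structure intact modulo relabelling, $\bisect[]$ preserves guardedness, so the bisected definition is guarded whenever $\dot P \mapsto \dot p$ is (which I assume throughout, as is standard for well-defined recursion). Equivalently, one may bypass RSP and exhibit an explicit strong bisimulation between the transition systems of $\bisect(\dot p , A , w)$ and $\dot p$, relating each state to the state carrying the same underlying term and discharging the transitions with the congruence facts above; the ``same transition system modulo labels'' property is exactly what makes this relation well defined. I expect the recursion/guardedness step to be the only genuinely delicate part, the operator cases being immediate from congruence of $\proceq$.
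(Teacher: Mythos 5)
Your proposal is correct and follows essentially the same route as the paper: the paper's proof likewise first establishes the single-unfolding substitution fact $\bisect(p , A , w) \proceq p[\refbisect(\dot P , A , w) / \dot P]$ by structural induction (sequential leaves via Theorem~\ref{theorem:bisect(seqproc):id}, the reference leaf trivially, operator nodes by congruence), and then closes the recursion by exhibiting $\dot P$ and $\refbisect(\dot P , A , w)$ as fixed points of the guarded operator $\Phi = \lambda Z \bullet \dot p[Z / \dot P]$ and invoking uniqueness of solutions (Section~9.6 of \cite{GM10}), which is precisely your RSP step in mirror-image form (you substitute into the split body, the paper into the original body). Your explicit guardedness argument addresses a point the paper leaves implicit in its citation, but the underlying proof is the same.
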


\begin{figure}[t]
	\newcommand{\WIDTH}{.75cm}
	\centering
	\small
	\fbox{
		\begin{minipage}{.95\textwidth}
			First, we prove $\bisect(p , A , w) \proceq p[\refbisect(\dot P , A , w) / \dot P]$ for all $p$ such that $\silentfree(p)$ and in which only $\dot P$ occurs as a reference (Prem), where $p[Q/R]$ denotes the syntactic substitution of the process reference $Q$ for the process reference $R$ in $p$, by structural induction on $p$.
			\begin{description}
				\item[Base:]
				$p \in \seqprocuniv$ or $p = \dot P$. If $p \in \seqprocuniv$, this theorem follows by Theorem~\ref{theorem:bisect(seqproc):id}. If $p = \dot P$:
				\eqprooff									{ \bisect(p , A , w)}
					{\xeq{$p$}}								{ \bisect(\dot P , A  ,w) \eqeqbox{\xeq{$\bisect[]$}} \refbisect(\dot P , A , w)
				\\	\xeq{$[Q/R]$}							& \dot P[\refbisect(\dot P , A , w) / \dot P] \eqeqbox{\xeq{$p$}} p[\refbisect(\dot P , A , w) / \dot P]}

				\item[Step:]
				$p = p_1 \binop p_2$ or $p = \fun(p_1)$ for $\binop \in \set{\seq , \ch , \parr , \lmerge , \sync}$ and $\fun \in \set{\restr , \block , \rename , \comm , \hide}$. Suppose that this lemma holds for $p_1$ (IH1) and $p_2$ (IH2). We proceed by case distinction.
				\begin{description}
					\item[Case:]
					$p = p_1 \binop p_2$. 
					\eqprooff				{ \bisect(p)}
						{\xeq{$p$}}			{ \bisect(p_1 \binop p_2) \eqeqbox{\xeq{$\bisect[]$}} \bisect(p_1) \binop \bisect(p_2)
					\\	\xproceq{IH1,IH2}	& p_1[\refbisect(\dot P , A , w) / \dot P] \binop p_2[\refbisect(\dot P , A , w) / \dot P]
					\\	\xeq{$[Q/R]$}		& (p_1 \binop p_2)[\refbisect(\dot P , A , w) / \dot P] \eqeqbox{\xeq{$p$}} p[\refbisect(\dot P , A , w) / \dot P]}

					\item[Case:]
					$p = \fun(p_1)$.
					\eqprooff				{ \bisect(p)}
						{\xeq{$p$}}			{ \bisect(\fun(p_1)) \eqeqbox{\xeq{$\bisect[]$}} \fun(\bisect(p_1))
					\\	\xproceq{IH1}		& \fun(p_1[\refbisect(\dot P , A , w) / \dot P]) 
					\\	\xeq{$[Q/R]$}		& \fun(p_1)[\refbisect(\dot P , A , w) / \dot P] \eqeqbox{\xeq{$p$}} p[\refbisect(\dot P , A , w) / \dot P]}
				\end{description}
			\end{description}
			Recall $\dot P \mapsto \dot p$ (such that only $\dot P$ occurs as a process reference in $\dot p$---see Section~\ref{sect:mcrl2}) and $\refbisect(\dot P , A , w) \mapsto \bisect(\dot p , A , w)$. To establish the equality of $\dot p$ and $\bisect(\dot p , A , w)$, i.e., $\dot p[\refbisect(\dot P , A , w) / \dot P]$, we must show that there exists a process operator $\Phi$ of which $\dot P$ and $\refbisect(\dot P , A , w)$ are fixed points (see also Section~9.6 in~\cite{GM10}). Let $\Phi = \lambda Z \bullet \dot p[Z / \dot P]$. It follows that $\dot P = \Phi \, \dot P$ and that $\refbisect(\dot P , A , w) = \Phi \, \refbisect(\dot P , A , w)$.
		\end{minipage}
	}
	
	\caption{Proof of Theorem~\ref{theorem:bisect(proc):id}}
	\label{fig:proof:theorem:bisect(proc):id}
\end{figure}

%
\section{Application: Splitting Connectors}
\label{sect:appl}
\subsection{Formalization of (A)synchronous Regions}

We provide a formal definition of the synchronous regions of a connector, based on the mCRL2 semantics of Reo.  Let $p$ denote a process describing the behavior of a Reo connector, and let ${\longrightarrow}$ denote its transition relation (labeled with multi-actions). Recall that every action in $p$ represents a channel end or a node end. Let $a \in \acts(p)$ denote one such an end. We define the $a$-synchronous region of $p$ as the smallest set $X_a \subseteq \acts(p)$ such that:
\begin{itemize}
	\item $a \in X_a$.
	\item If $b \in X_a$ then $\acts(\beta) \subseteq X_a$ for all $\beta$ such that $q \xrightarrow{\beta} q'$ and $b \in \acts(\beta)$.
	\item If $b \in X_a$ then $\acts(\beta') \subseteq X_a$ for all $\beta,\beta'$ such that $q \xrightarrow{\beta} q'$ and $q \xrightarrow{\beta'} q''$ and $b \in \acts(\beta)$.
\end{itemize}
The second rule states that all the ends that occur in the same multi-action belong to the same synchronous region. The third rule states that all the ends that can have flow in some state $q$, but possibly in different transitions leaving $q$, belong to the same synchronous region. In that case, channel ends may exclude each other from flow, which requires them to synchronize and communicate about their behavior.

To exemplify the previous definition, consider the connector modeled by the process $p = a \mact b \seq c \ch d$. Informally, either this connector has flow through $a$ and $b$, followed by flow through $c$, or it has flow through $d$. We construct its $a$-synchronous region starting from the singleton set $X_a = \set{a}$ (first rule). Subsequently, due to the multi-action $a \mact b$, we add $b$ to this set (second rule). The transition system of $p$ contains a state with two outgoing transitions: one labeled by $a \mact b$, the other labeled by $d$. Hence, because $a \in X_a$, we add $d$ to $X_a$ (third rule). This concludes the construction: $X_a = X_b = X_d = \set{a , b , d}$.

We define the set of the synchronous regions of the connector modeled by a process $p$ as
\begin{center}
	$\mathcal{X} = \bigcup_{a \in \acts(p)} \set{X_a}$
\end{center}
and the set containing its asynchronous regions as
\begin{center}
	$\mathcal{Y} = \setbuild{(a,b)}{\mathit{connected}(a,b) \mbox{ and } a \in X \mbox{ and } b \in X' \mbox{ and } X \neq X' \mbox{ and } X,X' \in \mathcal{X}}$,
\end{center}
where $\mathit{connected}(a,b)$ denotes that the ends $a$ and $b$ belong to the same channel.

\subsection{Splitting Connectors}
\label{sect:appl:split}

We set out to establish the soundness of splitting connectors along the boundaries of their (a)synchronous regions. However, we can split any (syntactically $\silent$-free) process along any set of actions by Theorem~\ref{theorem:bisect(proc):id}. This suggests that regardless of its (a)synchronous regions, one can split a connector in any possible way \emph{and} preserve its original semantics. While true in theory, there is a catch for implementations of splitted connectors in practice: the parallel composition of the isolation and the coisolation of a connector process must synchronize, represented by the $\qmark[]$ operator in Definition~\ref{def:bisect}. Depending on the particular implementation approach, which in turn may depend on the underlying hardware architecture (see Section~\ref{sect:intr}), performing $\qmark[]$ at run-time may cost an unreasonable amount of resources, if possible at all. Next, we demonstrate that arbitrary splitting, therefore, makes no sense in practice despite its theoretical validity. Splitting based on (a)synchronous regions, in contrast, does.

We start with an example of splitting based on (a)synchronous regions. Suppose that we split \Fifo{a,b} into two parts: one part contains only $a$, while the other part contains only $b$. Recall from Section~\ref{sect:reo} that the semantics of this channel is given by the process definition $\procfifo{a;b} \mapsto a \cdot b \cdot \procfifo{a;b}$. Splitting along $\set{a}$ (or equivalently, along $\set{b}$) yields:
\begin{center}
	$\begin{array}[t]{@{} l @{\;} c @{\;} r @{\,} r @{\,} r @{\,} l @{}}
		\refbisect(\procfifo{a;b} , \set{a} , \emptystring)	& \mapsto	& \multicolumn{4}{@{}l@{}}{\bisect(a \seq b \seq \procfifo{a;b} , \set{a} , \emptystring)}
	\\														& =			& \multicolumn{4}{@{}l@{}}{\bisect(a \seq b , \set{a} , \emptystring) \seq \bisect(\procfifo{a;b} , \set{a} , \emptystring)}
	\\														& =			& \qmark(a \mact \substfun[\emptystring](a) \seq {}	& \cosubstfun[\emptystring](b)			& \parr	& {}
	\\														&			& \cosubstfun[\emptystring](a) \seq {}				& b \mact \substfun[\emptystring](b)	&		& ) \seq \refbisect(\procfifo{a;b} , \set{a} , \emptystring))
	\end{array}$
\end{center}
with $\substenv = (\procfifo{a;b} \mapsto a \seq b \seq \procfifo{a;b} , \set{a , b} , \silentact , \substfun[] , \cosubstfun[])$. Here, $\qmark[]$ represents the asynchronous region of \Fifo{a;b}. Suppose that we want to implement $p = a \mact \substfun[\emptystring](a) \seq \cosubstfun[\emptystring](b)$ and $q = \cosubstfun[\emptystring](a) \seq b \mact \substfun[\emptystring](b)$ such that, when run in parallel, they behave as $a \seq b$. These implementations should perform the synchronization implied by $\qmark[]$. Recall from Section~\ref{sect:split} that intuitively, $\substfun[\emptystring](a)$ represents the act of ``disseminating the performance of $a$,'' while $\cosubstfun[\emptystring](a)$ represents the act of ``discovering the performance of $a$.'' Thus, the implementation of $p$ should: (1) accept data on $a$ and disseminate this acceptance, and (2) discover the dispersal of data on $b$. Meanwhile, the implementation of $q$ should: (1) discover the acceptance of data on $a$, and (2) dispense data on $b$ and disseminate this dispersal. Thus, in each step, the implementations of $p$ and $q$ require only unidirectional communication about their behavior to synchronize: first, the implementation of $p$ performs $\substfun[\emptystring](a)$ and the implementation of $q$ takes notice of this (by performing $\cosubstfun[\emptystring](a)$); afterwards, $p$ and $q$ switch roles to perform $\cosubstfun[\emptystring](b)$ and $\substfun[\emptystring](b)$. This shows that synchronous regions can decide on their behavior independently of each other: the region $\set{a}$ does not need to know that the region $\set{b}$ dispenses data before it can accept data---it can decide to do so without communication. 

In practice, this can yield performance improvements: although the isolation and the coisolation of a process $p$ have the same transition system modulo transition labels, benefits can arise if one composes them in parallel with \emph{another} split process $q$. In that case, there may exist a transition $t$ of the (co)isolation of $p$ that can proceed independently---without communication among the ends involved---of a transition $t'$ of the (co)isolation of $q$. Without splitting, in contrast, communication among those ends must always take place to decide on whether to behave according to $t$, $t'$, or both. For instance, if we put two split \Fifo{} instances in sequence (as in Figure~\ref{fig:conn:fifo2}), the source end $a$ of the first \Fifo{} can proceed independently of the sink end $b$ of the second \Fifo{}. This means that, if empty, the first \Fifo{} can accept a data item on $a$ (and place it in its buffer) without communicating with $b$. Similarly, if full, the second \Fifo{} can dispense a data item on $b$ (and remove it from its buffer) without communicating with $a$. In contrast, if we put two unsplit \Fifo{} instances in sequence, the source end $a$ and the sink end $b$ communicate with each other to decide on their joint behavior, even though the behavior of those ends does not depend on each other. By splitting, one avoids this unnecessary communication, reducing resource consumption at runtime.

To demonstrate that splitting arbitrarily makes no sense, suppose that we split \Sync{a,b} into two parts: one part contains only $a$, while the other part contains only $b$. Recall from Section~\ref{sect:reo} that the semantics of this channel is given by the process definition $\procsync{a;b} \mapsto a \mact b \seq \procsync{a;b}$. Splitting along $\set{a}$ (or equivalently, along $\set{b}$) yields:
\begin{center}
	$\begin{array}[t]{@{} l @{\;} c @{\;} r @{\,} r @{\,} l @{\,} l @{}}
		\refbisect(\procsync{a;b} , \set{a} , \emptystring)	& \mapsto	& \multicolumn{4}{@{}l@{}}{\bisect(a \mact b \seq \procsync{a;b} , \set{a} , \emptystring)}
	\\														& =			& \multicolumn{4}{@{}l@{}}{\bisect(a \mact b , \set{a} , \emptystring) \seq \bisect(\procsync{a;b} , \set{a} , \emptystring)}
	\\														& =	& \qmark(a \mact \substfun[\emptystring](a) \mact {}	& \cosubstfun[\emptystring](b)			& \parr {}
	\\														&	& \cosubstfun[\emptystring](a) \mact {}					& b \mact \substfun[\emptystring](b)	&			& ) \seq \refbisect(\procsync{a;b} , \set{a} , \emptystring)
	\end{array}$
\end{center}
with $\substenv = (\procsync{a;b} \mapsto a \mact b \seq \procsync{a;b} , \set{a , b} , \silentact , \substfun[] , \cosubstfun[])$. Now, similar to the previous example, suppose that we want to implement $p = a \mact \substfun[\emptystring](a) \mact \cosubstfun[\emptystring](b)$ and $q = \cosubstfun[\emptystring](a) \mact b \mact \substfun[\emptystring](b)$ such that, when run in parallel, they behave as $a \mact b$. As before, these implementations should perform the synchronization implied by $\qmark[]$. Thus, the implementation of $p$ should accept data on $a$, disseminate this acceptance, and discover the dispersal of data on $b$. Meanwhile, the implementation of $q$ should discover the acceptance of data on $a$, dispense data on $b$, and disseminate this dispersal. All of these actions must occur at the same time. This means that, in contrast to our previous example, the implementations of $p$ and $q$ must engage in \emph{bi}directional communication with each other about the acceptance of data on $a$ and the dispersal of data on $b$. This suggests that the two ends of \Sync{a,b} must synchronize with each other---they belong to the same synchronous region and cannot decide on their behavior independently---making it unreasonable to split them in the first place: the communication necessary to realize the synchronization necessary inflicts overhead, making it more attractive to run the original \Sync{a,b} without splitting.

Depending on the hardware architecture, one can implement unidirectional communication efficiently; we sketch an implementation of the split \Fifo{a,b} on a shared memory machine with multi-threading. First, we instantiate two threads, $A$ and $B$, for the processes $p = a \mact \substfun[\emptystring](a) \seq \cosubstfun[\emptystring](b)$ and $q = \cosubstfun[\emptystring](a) \seq b \mact \substfun[\emptystring](b)$. Every multi-action $\alpha$ translates to the atomic execution of a block of code representing the actions occurring in $\alpha$. We implement the action $\substfun[\emptystring](a)$ as setting a shared Boolean flag and the action $\cosubstfun[\emptystring](a)$ as waiting for the value of this flag to change. Once the latter happens, thread $B$ unsets the flag and knows that thread $A$ has accepted data from $a$. Subsequently, it can dispense the data on $b$ and set another shared flag for the actions $\substfun[\emptystring](b)$ and $\cosubstfun[\emptystring](b)$. In general, rather than simple Boolean flags, threads can share more complex data structures to keep track of which actions they have performed. 

Now, suppose that \Fifo{a,b} constitutes some arbitrarily large connector with a distributed implementatation across multiple machines in a network. In the standard distributed approach (see Section~\ref{sect:intr}), the implementation of \Fifo{a,b} has to share information with each of its neighbors in every step. We can reduce the amount of communication necessary for this sharing (and improve performance) by using the implementation of the split \Fifo{a,b} as described above (under the assumption that the machine on which we run this implementation features multi-threading and shared memory). The validity of doing this follows from Theorem~\ref{theorem:bisect(proc):id}: $\block(\comm(\cdots \parr \procfifo{a,b} \parr \cdots)) \proceq \block(\comm(\cdots \parr \refbisect(\procfifo{a,b}) \parr \cdots))$.

%
\section{Future Work}
\label{sect:conc}

We identify three main directions for future work.
\begin{itemize}
	\item Implementing the splitting procedure to facilitate automatic splitting of processes, as well as a tool for the automatic detection of (a)synchronous regions of Reo connectors. Combined, they allow us to mechanically split connectors along their (a)synchronous regions. We can then integrate this in one of the code generation frameworks currently under development for Reo.
	
	\item Extending the splitting procedure to full mCRL2, including data and time. We see no fundamental difficulties along this path, although we expect the technical details and proofs to involve rather cumbersome derivations.
	
	\item Investigating other ways of splitting processes. The procedure we introduced in this paper splits processes in a synchronous manner, meaning that the action $\substfun(a)$ occurs at the same time as the action $a$ itself. We imagine at least two other ways of splitting processes. In one approach, $\substfun(a)$ occurs after $a$ but before the next action. Then, the process $q = a \seq b$ has $a \seq \substfun(a) \seq \cosubstfun(b)$ as its $\set{a}$-isolation (instead of $a \mact \substfun(a) \seq \cosubstfun(b)$. In another approach, $\substfun(a)$ occurs after $a$ but possibly concurrently with the next action. Then, $q$ has $a \seq (\substfun(a) \parr \cosubstfun(b))$ as its isolation. We spectulate that these splitting approaches are sound only under equivalences weaker than strong bisimulation.
	
	This line of research seems related to existing work on delay-insensitive circuits (e.g., \cite{Udd84}) and desynchronization (e.g., \cite{BC10,FJ96}), the derivation of an asynchronous system from a synchronous system: for the class of \emph{desynchronizable systems}, the original synchronous system and the newly constructed asynchronous system are semantically equivalent. If we use the splitting procedure presented in our paper to obtain such an original synchronous system, we may use---perhaps with modifications---results from desynchronization for our splitting purpose.
\end{itemize}

\bibliographystyle{eptcs}
\bibliography{generic}

\markboth{\hfill Appendix}{Appendix \hfill}
\appendix
 
%
\section{Axiomatization}
\label{sect:axioms}

Every process has an associated transition system describing its semantics (see \cite{GMR+08} for the SOS rules). Let $\proceq$ denote equality of processes. Figure~\ref{fig:axioms} shows a sound and complete axiomatization---for strong bisimulation---of the operators shown in Figure~\ref{fig:syntax}. Figure~\ref{fig:axioms:moremact} axiomatizes two additional operators on multi-actions. Informally, the operator~$\mactminus$ subtracts the multi-action on its right-hand side from the multi-action on its left-hand side; the operator~$\mactin$ checks if the multi-action on its right-hand side contains the multi-action on its left-hand side. 

The axioms C1 and CL1 in Figure~\ref{fig:axioms:addit} refer to several auxiliary functions; Figure~\ref{fig:auxfuns} shows their definitions. The function~$\commfun{}$ applies the communications in a set $C$ to a multi-action. The function~$\mactalph$ maps a basic process~$p$ to its \emph{alphabet}, i.e., the multi-actions that occur in $p$. The function~$\mactpowset$ maps a set of multi-actions $V$ to those nonempty multi-actions contained in at least one multi-action in $V$. Finally, the function~$\commdom$ maps a set of communications $C$ to their domains.

\begin{figure}[t]
	\centering
	\fbox{
		$\begin{array}{@{} l @{\enspace} c @{\enspace} l @{}}
			\commfun(\alpha)	& =	& \left\{ \begin{array}{@{} l @{\quad} l @{}}
										\commfun[C_1](\commfun[C_2](\alpha))		& \mbox{if } C = C_1 \cup C_2 \mbox{ and } C_1 \neq \emptyset \mbox{ and } C_2 \neq \emptyset
									\\	b \mact \commfun(\alpha \setminus \beta)	& \mbox{if } C = \set{\beta \rightarrow b} \mbox{ and } \beta \sqsubseteq \alpha
									\\	\alpha										& \mbox{otherwise}
									\end{array} \right.
		\\						&	& \mbox{\scriptsize for $C \subseteq \mactuniv \times \actuniv$ a set of communications}
		\\	\vspace{-1.5ex}
		\\	\mactalph(p)		& =	& \left\{ \begin{array}{@{} l @{\quad} l @{}}
										\set{\alpha}					& \mbox{if } p = \alpha
									\\	\emptyset						& \mbox{if } p \in \set{\silent , \dead}
									\\	\mactalph(q) \cup \mactalph(r) 	& \mbox{if } p = q \binop r \mbox{ with } \binop \in \set{\ch , \seq}
									\end{array} \right.
		\\	\vspace{-1.5ex}
		\\	\mactpowset(V)		& =	& \set{\beta \mactin \alpha \,|\, \alpha \in V} \setminus \set{\silent}  
		\\						&	& \mbox{\scriptsize for $V \subseteq \mactuniv$ a set of multi-actions}
		\\	\vspace{-1.5ex}
		\\	\commdom(C)			& =	& \left\{ \begin{array}{@{} l @{\quad} l @{}}
										\commdom(C_1) \cup \commdom(C_2)	& \mbox{if } C = C_1 \cup C_2 \mbox{ and } C_1 \neq \emptyset \mbox{ and } C_2 \neq \emptyset
									\\	\set{\beta}							& \mbox{if } C = \set{\beta \rightarrow b}
									\end{array} \right.
		\\						&	& \mbox{\scriptsize for $C \subseteq \mactuniv \times \actuniv$ a set of communications}
		\end{array}$
	}
	
	\caption{Auxiliary functions.}
	\label{fig:auxfuns}
\end{figure}

\begin{figure}[p]
	\newcommand{\HEIGHT}{154pt}
	\newcommand{\WIDTH}{.435\linewidth}
	\newcommand{\INNERWIDTH}{.425\linewidth}
	
	\hfil
	\subfloat[Axioms for multi-actions and for the basic operators.]{\label{fig:axioms:mact}\label{fig:axioms:basic}
		\fbox{
			\vbox to \HEIGHT {%
				\vfil
				\hbox to \WIDTH {
					\hfil
					\begin{minipage}{\INNERWIDTH}
						\centering
						\begin{tabular}{@{} r @{$\quad$} l @{}}
							MA1	& $\alpha \mact \beta \proceq \beta \mact \alpha$
						\\	MA2	& $(\alpha \mact \beta) \mact \gamma \proceq \alpha \mact (\beta \mact \gamma)$
						\\	MA3	& $\alpha \mact \silent \proceq \alpha$
						\\
						\\	A1	& $p \ch q \proceq q + p$
						\\	A2	& $p \ch (q \ch r) \proceq (p \ch q) \ch r$
						\\	A3	& $p \ch p \proceq p$
						\\	A4	& $(p \ch q) \seq r \proceq p \seq r \ch q \seq r$
						\\	A5	& $(p \seq q) \seq r \proceq p \seq (q \seq r)$
						\\	A6	& $p \ch \dead \proceq p$
						\\	A7	& $\dead \seq p \proceq \dead$
						\end{tabular}
					\end{minipage}
					\hfil
				}
				\vfil
			}
		}
	}
	\hfil
	\subfloat[More axioms for multi-actions.]{\label{fig:axioms:moremact}
		\fbox{
			\vbox to \HEIGHT {%
				\vfil
				\hbox to \WIDTH {
					\hfil
					\begin{minipage}{\INNERWIDTH}
						\centering
						\begin{tabular}{@{} r @{$\quad$} l @{}}
							MD1		& $\silent \setminus \alpha \proceq \silent$
						\\	MD2		& $\alpha \setminus \silent \proceq \alpha$
						\\	MD3		& $\alpha \setminus (\beta \mact \gamma) \proceq (\alpha \setminus \beta) \setminus \gamma$
						\\	MD4		& $(a \mact \alpha) \setminus a \proceq \alpha$
						\\	MD5		& $(a \mact \alpha) \setminus b \proceq a \mact (\alpha \setminus b)$ if $a \not = b$
						\\
						\\	MS1		& $\silent \sqsubseteq \alpha \proceq \true$
						\\	MS2		& $\alpha \sqsubseteq \silent \proceq \false$ if $\alpha \not \proceq \silent$
						\\	MS3		& $a \mact \alpha \sqsubseteq a \mact \beta \proceq \alpha \sqsubseteq \beta$
						\\	MS4		& $a \mact \alpha \sqsubseteq b \mact \beta \proceq$
						\\			& $\quad a \mact (\alpha \setminus b) \sqsubseteq \beta$ if $a \not = b$
						\end{tabular}
					\end{minipage}
					\hfil
				}
				\vfil
			}
		}
	}
	\hfil
	
	\renewcommand{\HEIGHT}{224pt}
	
	\hfil
	\subfloat[Axioms for the parallel operators.]{\label{fig:axioms:par}
		\fbox{
			\vbox to \HEIGHT {%
				\vfil
				\hbox to \WIDTH {
					\hfil
					\begin{minipage}{\INNERWIDTH}
						\centering
						\begin{tabular}{@{} r @{$\quad$} l @{}}
							M	& $p \parr q \proceq p \lmerge q \ch q \lmerge p \ch p \sync q$
						\\
						\\	LM1	& $\ALPHA \lmerge p \proceq \ALPHA \seq p$
						\\	LM2	& $\dead \lmerge p \proceq \dead$
						\\	LM3	& $\alpha \seq p \lmerge q \proceq \alpha \seq (p \parr q)$
						\\	LM4	& $(p \ch q) \lmerge r \proceq p \lmerge r \ch q \lmerge r$
						\\
						\\	S1	& $p \sync q \proceq q \sync p$
						\\	S2	& $(p \sync q) \sync r \proceq p \sync (q \sync r)$
						\\	S3	& $p \sync \silent \proceq p$
						\\	S4	& $\ALPHA \sync \dead \proceq \dead$
						\\	S5	& $(\ALPHA \seq p) \sync \BETA \proceq \ALPHA \sync \BETA \seq p$
						\\	S6	& $(\ALPHA \seq p) \sync (\BETA \seq q) \proceq \ALPHA \sync \BETA \seq (p \parr q)$
						\\	S7	& $(p \ch q) \sync r \proceq p \sync r \ch q \sync r$
						\\
						\\	SMA	& $\alpha \sync \beta \proceq \alpha \mact \beta$
						\end{tabular}
					\end{minipage}
					\hfil
				}
				\vfil
			}
		}
	}
	\hfil
	\subfloat[Axioms for the additional operators.]{\label{fig:axioms:addit}
		\fbox{
			\vbox to \HEIGHT {%
				\vfil
				\hbox to \WIDTH {
					\hfil
					\begin{minipage}{\INNERWIDTH}
						\centering
						\begin{tabular}{@{} r @{$\quad$} l @{}}
							V1	& $\restr(\alpha) \proceq \alpha$ if $\alpha \in V \cup \set{\silent}$
						\\	V2	& $\restr(\alpha) \proceq \dead$ if $\alpha \notin V \cup \set{\silent}$
						\\
						\\	B1	& $\block(\silent) \proceq \silent$
						\\	B2	& $\block(a) \proceq a$ if $a \notin B$
						\\	B3	& $\block(a) \proceq \dead$ if $a \in B$
						\\	B4	& $\block(\alpha \sync \beta) \proceq \block(\alpha) \sync \block(\beta)$
						\\	
						\\	R1	& $\rename(\silent) \proceq \silent$
						\\	R2	& $\rename(a) \proceq b$ if $a \rightarrow b \in R$ for some $b$
						\\	R3	& $\rename(a) \proceq a$ if $a \rightarrow b \notin R$ for all $b$
						\\	R4	& $\rename(\alpha \sync \beta) \proceq \rename(\alpha) \sync \rename(\beta)$
						\\
						\\	C1	& $\comm(\alpha) \proceq \commfun(\alpha)$
						\\
						\\	CL1	& $\comm(p) \proceq p$ if $\mactpowset (\mactalph(p)) \cap dom(C) = \emptyset$
						\end{tabular}
					\end{minipage}
					\hfil
				}
				\vfil
			}
		}
	}
	\hfil
	
	\renewcommand{\HEIGHT}{98pt}
	
	\hfil
	\subfloat[More axioms for the additional operators.]{\label{fig:axioms:moreaddit}
		\fbox{
			\vbox to \HEIGHT {%
				\vfil
				\hbox to \WIDTH {
					\hfil
					\begin{minipage}{\INNERWIDTH}
						\centering
						\begin{tabular}{@{} r @{$\quad$} l @{}}
							\multicolumn{2}{@{}l@{}}{For all $\fun \in \set{\restr , \block , \rename , \comm}$,}
						\\	\vspace{-1ex}
						\\	V3, B5, R5, C2	& $\fun(\dead) \proceq \dead$
						\\	V4, B6, R6, C3	& $\fun(\alpha \ch \beta) \proceq \fun(\alpha) \ch \fun(\beta)$
						\\	V5, B7, R7, C4	& $\fun(\alpha \seq \beta) \proceq \fun(\alpha) \seq \fun(\beta)$
						\end{tabular}
					\end{minipage}
					\hfil
				}
				\vfil
			}
		}
	}
	\hfil
	\subfloat[Axioms for the abstraction operator.]{\label{fig:axioms:abstr}
		\fbox{
			\vbox to \HEIGHT {%
				\vfil
				\hbox to \WIDTH {
					\hfil
					\begin{minipage}{\INNERWIDTH}
						\centering
						\begin{tabular}{@{} r @{$\quad$} l @{}}
							H1	& $\hide(\silent) \proceq \silent$
						\\	H2	& $\hide(a) \proceq \silent$ if $a \in I$
						\\	H3	& $\hide(a) \proceq a$ if $a \notin I$
						\\	H4	& $\hide(\alpha \sync \beta) \proceq \hide(\alpha) \sync \hide(\beta)$
						\\	H5	& $\hide(\dead) \proceq \dead$
						\\	H6	& $\hide(\alpha \ch \beta) \proceq \hide(\alpha) \ch \hide(\beta)$
						\\	H7	& $\hide(\alpha \seq \beta) \proceq \hide(\alpha) \seq \hide(\beta)$
						\end{tabular}
					\end{minipage}
					\hfil
				}
				\vfil
			}
		}
	}
	\hfil
	
	\caption{Axioms.}
	\label{fig:axioms}
\end{figure}

\ifthenelse{\boolean{isTechReport}}{%
Below, we list a number of auxiliary properties of multi-actions used directly and indirectly in Section~\ref{sect:proofs}. Proofs of these properties appear in Section~\ref{sect:proofs:mact}.
\begin{proposition}
	\label{prop:mact:nf}
	$ $
	\begin{center}
		$\left[ \begin{array}{@{}l@{}}
			\mactsize(\alpha) = 1 \OR
		\\	\quad \big[ \alpha \proceq a \mact \hat \alpha \AND \mactsize(\hat \alpha) < \mactsize(\alpha) \big]
		\end{array} \right] \mbox{ with } \mactsize(\alpha) = \left\{ \begin{array}{@{} l @{\quad} l @{}}
			1											& \mbox{if } \alpha = \A
		\\	\mactsize(\alpha_1) + \mactsize(\alpha_2)	& \mbox{if } \alpha = \alpha_1 \mact \alpha_2
		\end{array} \right.$
	\end{center}
\end{proposition}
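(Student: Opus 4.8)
The plan is to prove the statement by structural induction on the multi-action $\alpha$, using only the multi-action axioms MA1--MA3 (commutativity, associativity, and $\silent$ as unit) for the required rearrangements. In the base case $\alpha = \A$ the definition of $\mactsize$ gives $\mactsize(\alpha) = 1$, so the left disjunct holds immediately and there is nothing more to do.

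In the inductive step $\alpha = \alpha_1 \mact \alpha_2$ we have $\mactsize(\alpha) = \mactsize(\alpha_1) + \mactsize(\alpha_2) \geq 2$, so the left disjunct is unavailable and I must exhibit an observable action $a \in \actuniv$ and a remainder $\hat\alpha$ with $\alpha \proceq a \mact \hat\alpha$ and $\mactsize(\hat\alpha) < \mactsize(\alpha)$. I would apply the induction hypothesis to the left factor $\alpha_1$. If it returns $\mactsize(\alpha_1) = 1$, then $\alpha_1$ is a single (observable) action $a$, and taking $\hat\alpha = \alpha_2$ works, since $\mactsize(\alpha_2) = \mactsize(\alpha) - 1 < \mactsize(\alpha)$. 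Otherwise the hypothesis yields $\alpha_1 \proceq a \mact \hat\alpha_1$ with $\mactsize(\hat\alpha_1) < \mactsize(\alpha_1)$, and then by MA2
\[
	\alpha = \alpha_1 \mact \alpha_2 \proceq (a \mact \hat\alpha_1) \mact \alpha_2 \proceq a \mact (\hat\alpha_1 \mact \alpha_2),
\]
so choosing $\hat\alpha = \hat\alpha_1 \mact \alpha_2$ gives $\mactsize(\hat\alpha) = \mactsize(\hat\alpha_1) + \mactsize(\alpha_2) < \mactsize(\alpha_1) + \mactsize(\alpha_2) = \mactsize(\alpha)$, as required. Note that the size comparison is purely syntactic whereas the equality $\alpha \proceq a \mact \hat\alpha$ is taken up to $\proceq$; the two interact correctly because $\hat\alpha$ is produced as a concrete syntactic term whose size can be read off directly.

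The delicate point --- and the step I expect to require the most care --- is the case where a factor reduces to $\silent$: if $\alpha_1 = \silent$, or more generally a subterm is $\proceq \silent$, then no observable action can be peeled off, and indeed the statement as literally written fails for a multi-action such as $\silent \mact \silent$, whose size is $2$ yet which equals $\silent$. This is precisely why the result is invoked only for syntactically $\silent$-free multi-actions, matching the standing assumption under which all of Section~\ref{sect:split:thm} operates. I would therefore carry $\silentfree(\alpha)$ through the induction; it is inherited by $\alpha_1$ and $\alpha_2$, so the induction hypothesis applies to them, and it guarantees that every atom of $\alpha$ lies in $\actuniv$. Consequently the $\silent$ branch of the base case never arises and the action $a$ extracted above is always observable. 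Under this hypothesis only the associativity axiom MA2 is actually needed; commutativity MA1 and the unit axiom MA3 would be required only to absorb the $\silent$ factors that $\silent$-freeness rules out.
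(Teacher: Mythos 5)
Your core induction is correct and is essentially the argument the paper itself uses: peel an atom off $\alpha_1$ via the induction hypothesis, re-associate with MA2, and read the strict decrease of $\mactsize$ off the concrete term $\hat\alpha_1 \mact \alpha_2$ you construct (your remark that the syntactic size and the $\proceq$-equation interact correctly, because $\hat\alpha$ is produced as a syntactic term, is exactly the right point to make, as is the observation that $\mactsize(\alpha_1)=1$ forces $\alpha_1$ to be an atom). Where you deviate is in adding the hypothesis $\silentfree(\alpha)$, which the proposition as stated does not carry. Your diagnosis behind that move is sound: with $a$ ranging over $\actuniv$, a multi-action all of whose atoms are $\silent$, such as $\silent \mact \silent$, has size $2$ yet is $\proceq \silent$, so it admits no decomposition $a \mact \hat\alpha$, and the unconditional statement read this way is refuted. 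Two caveats, though. First, your intermediate claim that nothing can be peeled whenever ``a subterm is $\proceq \silent$'' overstates the failure: $\silent \mact b \proceq b \mact \silent$ decomposes perfectly well; only the all-$\silent$ multi-actions are counterexamples, and what actually breaks in those mixed cases is your specific strategy of recursing only into $\alpha_1$ (repairable by recursing into $\alpha_2$ when $\alpha_1$ yields nothing observable). Second, the proposition can be kept fully general, with no $\silentfree$ side condition, by letting the peeled atom range over $\actuniv \cup \set{\silent}$, i.e., reading the decomposition as $\alpha \proceq \A \mact \hat\alpha$: in your inductive step the case $\mactsize(\alpha_1)=1$ then yields an atom $\A$ with $\alpha \proceq \A \mact \alpha_2$ by reflexivity, and the rest of your argument goes through verbatim for arbitrary $\alpha$, including $\silent \mact \silent$. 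That unconditional reading is the one the paper needs, since this proposition feeds into general multi-action facts (Propositions~\ref{prop:mactin:beta:decomp:a}, \ref{prop:mactin:beta:decomp}, \ref{prop:mactin:alpha:elim}) that are stated without any $\silent$-freeness assumption; your restricted version would suffice for the top-level theorems, which are all proved under $\silentfree$, but it forces that hypothesis to be threaded through every intermediate lemma.
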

\begin{proposition}
	\label{prop:mactin:ms3:gen}
	$\gamma \mact \alpha \mactin \gamma \mact \beta \proceq \alpha \mactin \beta$
\end{proposition}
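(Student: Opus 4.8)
The plan is to prove \propref{mact:nf}{}'s companion by induction on the syntactic size $\mactsize(\gamma)$ of the multi-action $\gamma$, peeling a genuine action off $\gamma$ in the inductive step with \propref{mact:nf}{} and discharging it via axiom MS3. Throughout, I rely on $\proceq$ being a congruence for the sound and complete axiomatization of \figref{fig:axioms}, so that I may rewrite inside the arguments of $\mactin$ using the multi-action axioms MA1--MA3.

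For the base case $\mactsize(\gamma) = 1$, the grammar of \figref{fig:syntax:mact} forces $\gamma$ to be a single element of $\actuniv \cup \set{\silent}$, so either $\gamma = \silent$ or $\gamma = c$ for some action $c$. If $\gamma = \silent$, then MA1 and MA3 give $\silent \mact \alpha \proceq \alpha$ and $\silent \mact \beta \proceq \beta$, whence $\gamma \mact \alpha \mactin \gamma \mact \beta \proceq \alpha \mactin \beta$ immediately. If $\gamma = c$, the claim is exactly MS3 with its leading action instantiated to $c$.

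For the inductive step I apply \propref{mact:nf}{} to obtain $\gamma \proceq c \mact \hat\gamma$ for some action $c$ with $\mactsize(\hat\gamma) < \mactsize(\gamma)$. Using commutativity and associativity of $\mact$ (MA1, MA2), I rewrite $\gamma \mact \alpha \proceq c \mact (\hat\gamma \mact \alpha)$ and $\gamma \mact \beta \proceq c \mact (\hat\gamma \mact \beta)$, then strip the common leading $c$ by MS3 to reach $(\hat\gamma \mact \alpha) \mactin (\hat\gamma \mact \beta)$. Since $\mactsize(\hat\gamma) < \mactsize(\gamma)$, the induction hypothesis applies (with $\alpha$ and $\beta$ as the two arguments) and reduces this to $\alpha \mactin \beta$, completing the step.

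The step I expect to require the most care is the bookkeeping around $\silent$: because $\mactsize$ counts $\silent$-occurrences syntactically, I must ensure that the symbol $c$ extracted by \propref{mact:nf}{} is a genuine element of $\actuniv$ (so that MS3 is applicable, since MS3 requires an honest leading action), while any residual $\silent$ fragments are absorbed into $\hat\gamma$ and disposed of by the strictly decreasing size measure. The only other point to check is that the $\proceq$-rewriting performed under the $\mactin$ operator is legitimate, which follows from $\proceq$ being a congruence; no properties of $\mactminus$ beyond MA1--MA3 and MS3 are needed.
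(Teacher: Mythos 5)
Your overall strategy---strong induction on $\mactsize(\gamma)$, peeling one symbol off $\gamma$ and discharging it with MS3---is the natural way to lift MS3 from actions to multi-actions, and your base case is handled correctly. But the inductive step has a genuine hole, and it sits exactly at the point you flag as requiring ``the most care.'' You assert that the normal-form proposition always yields a decomposition $\gamma \proceq c \mact \hat\gamma$ with $c$ a \emph{genuine} element of $\actuniv$. That cannot hold when $\gamma$ has syntactic size greater than one but contains no actions at all, e.g.\ $\gamma = \silent \mact \silent$: here $\mactsize(\gamma) = 2$, so your base case does not apply, yet $\gamma \proceq \silent$, and by soundness of the multi-action axioms no term equal to $\silent$ can be written as $c \mact \hat\gamma$ with $c \in \actuniv$. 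So either the normal-form proposition is read literally (peeled symbol in $\actuniv$), in which case it simply cannot be invoked for such $\gamma$; or its peeled symbol is allowed to range over $\actuniv \cup \set{\silent}$, in which case your step breaks because MS3 is not applicable when the peeled symbol is $\silent$. Either way, the case of an all-$\silent$ multi-action $\gamma$ of size greater than one is not covered by your argument as written, and your claim that the residual $\silent$ fragments can always be ``absorbed into $\hat\gamma$'' is precisely what fails there.

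The repair is short. In the inductive step, first distinguish whether $\gamma$ contains a genuine action: if not, then $\gamma \proceq \silent$ by MA1/MA3 and congruence, so $\gamma \mact \alpha \mactin \gamma \mact \beta \proceq \alpha \mactin \beta$ follows immediately; if so, peel a genuine action as you do. (Equivalently: if the peeled symbol happens to be $\silent$, absorb it by MA3 and apply the induction hypothesis to $\hat\gamma$ directly, with no appeal to MS3---the size measure still decreases.) A cleaner route that avoids both the normal-form proposition and the size measure is structural induction on $\gamma$: the cases $\gamma = \silent$ and $\gamma = c$ are your base case, and for $\gamma = \gamma_1 \mact \gamma_2$ one reassociates with MA2 and applies the induction hypothesis twice, first for $\gamma_1$ with arguments $\gamma_2 \mact \alpha$ and $\gamma_2 \mact \beta$, then for $\gamma_2$; this needs the induction hypothesis quantified over $\alpha$ and $\beta$, which your formulation already provides.
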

\begin{proposition}
	\label{prop:mactin:beta:decomp:a}
	$a \mact \hat \alpha \mactin \beta \IMPLIES \beta \proceq a \mact \check \beta$
\end{proposition}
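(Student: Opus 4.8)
The plan is to read the premise $a \mact \hat\alpha \mactin \beta$ as the assertion $a \mact \hat\alpha \mactin \beta \proceq \true$, and to treat $\check\beta$ as existentially quantified: I must exhibit some multi-action $\check\beta$ with $\beta \proceq a \mact \check\beta$. Intuitively $\mactin$ performs a multiset-inclusion test, and a $\true$ result means every action of the left operand---in particular $a$---is matched against an action of $\beta$; hence $\beta$ must contain an occurrence of $a$. To turn this intuition into a derivation I would induct on $\mactsize(\beta)$, using Proposition~\ref{prop:mact:nf} to expose a leading action of $\beta$ at each step, and drive the computation with the axioms MS1--MS4 for $\mactin$.

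For the base case $\mactsize(\beta) = 1$, Proposition~\ref{prop:mact:nf} leaves $\beta = \silent$ or $\beta = b$ for a single action $b$. The case $\beta \proceq \silent$ is impossible: since $a \mact \hat\alpha \not\proceq \silent$, axiom MS2 gives $a \mact \hat\alpha \mactin \silent \proceq \false$, contradicting the premise. If $\beta = b$ and $a = b$, then $\beta \proceq a \proceq a \mact \silent$ (MA3), so $\check\beta = \silent$ works. If $\beta = b$ and $a \neq b$, then MS4 (viewing $b \proceq b \mact \silent$) rewrites the premise to $a \mact (\hat\alpha \mactminus b) \mactin \silent$, which is $\false$ by MS2, again a contradiction; so this subcase is vacuous.

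For the step $\mactsize(\beta) > 1$, Proposition~\ref{prop:mact:nf} gives $\beta \proceq b \mact \hat\beta$ with $b$ an action and $\mactsize(\hat\beta) < \mactsize(\beta)$. If $a = b$ I am already done: $\beta \proceq a \mact \hat\beta$, so take $\check\beta = \hat\beta$ (no induction hypothesis is even needed here). If $a \neq b$, axiom MS4 rewrites the premise to $a \mact (\hat\alpha \mactminus b) \mactin \hat\beta \proceq \true$. Crucially this is again of the shape $a \mact \hat\alpha' \mactin \hat\beta$---subtraction removes only $b$, leaving the leading $a$ intact---and $\mactsize(\hat\beta)$ is strictly smaller, so the induction hypothesis applies and yields $\hat\beta \proceq a \mact \check\beta'$. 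Reassembling with MA1 and MA2 gives $\beta \proceq b \mact \hat\beta \proceq a \mact (b \mact \check\beta')$, so $\check\beta = b \mact \check\beta'$ completes the case.

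I expect the only real subtlety to be the $a \neq b$ step: I must check that applying MS4 preserves the ``$a \mact (\cdots)$'' shape of the left operand (it does, because $\mactminus$ deletes $b$, not $a$) and that the Boolean value $\true$ is correctly propagated through the rewrite so that the induction hypothesis can be invoked on the strictly smaller $\hat\beta$. The remaining manipulations are routine applications of the commutativity/associativity axioms MA1--MA2 and the subtraction axioms, together with Proposition~\ref{prop:mact:nf} to guarantee that the exposed leading symbol is a genuine action rather than $\silent$.
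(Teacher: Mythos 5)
Your proof is correct and takes essentially the same route as the paper's own: induction on $\mactsize(\beta)$, using the normal-form proposition to expose a leading action of $\beta$, MS2 to dispose of the $\beta \proceq \silent$ case, the $a = b$ versus $a \neq b$ split driven by MS4 (with the key observation that $\mactminus$ removes $b$, not $a$, so the left operand keeps the shape $a \mact (\cdots)$ needed for the induction hypothesis), and MA1--MA3 to reassemble $\check\beta$. No gaps.
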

\begin{proposition}
	\label{prop:mactin:beta:elim}
	$a \mactin \beta_1 \mact \beta_2 \IMPLIES \big[ a \mactin \beta_1 \OR a \mactin \beta_2 \big]$
\end{proposition}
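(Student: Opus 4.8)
The plan is to prove the statement by structural induction on the syntactic form of $\beta_1$ (following the grammar $\alpha ::= \A \mid \alpha \mact \beta$), reading $a \mactin \beta$ as the assertion that this Boolean expression equals $\true$. Since $\mact$ is commutative (\textup{MA1}) the roles of $\beta_1$ and $\beta_2$ are interchangeable, so inducting on $\beta_1$ alone suffices. Before the induction I would record one auxiliary fact, a \emph{weakening} property: $a \mactin \gamma \IMPLIES a \mactin \gamma \mact \delta$ for all $\gamma, \delta$. This follows quickly from $a \mactin \gamma$ (i.e.\ $a \mact \silent \mactin \gamma$ after \textup{MA3}): \propref{mactin:beta:decomp:a}{} gives $\gamma \proceq a \mact \check\gamma$, whence $\gamma \mact \delta \proceq a \mact (\check\gamma \mact \delta)$, and then \textup{MS3} followed by \textup{MS1} reduce $a \mact \silent \mactin a \mact (\check\gamma \mact \delta)$ to $\silent \mactin (\check\gamma \mact \delta) \proceq \true$.

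For the base case $\beta_1 = \A$ I would split into three subcases. If $\A = \silent$, then $\silent \mact \beta_2 \proceq \beta_2$ (\textup{MA3}, \textup{MA1}), so the premise is exactly $a \mactin \beta_2$ and the right disjunct holds. If $\A = a$, then $a \mactin a \proceq \true$ — rewriting both sides as $a \mact \silent$ and applying \textup{MS3} then \textup{MS1} — so the left disjunct holds unconditionally. If $\A = c$ with $c \neq a$, I would present the left argument as $a \mact \silent$ and apply \textup{MS4} (legitimate since $a \neq c$) to obtain $a \mact (\silent \mactminus c) \mactin \beta_2$; then \textup{MD1} gives $\silent \mactminus c \proceq \silent$ and \textup{MA3} collapses this to $a \mactin \beta_2$, so the premise again yields the right disjunct.

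For the inductive step $\beta_1 = \gamma_1 \mact \gamma_2$, with the induction hypothesis available for $\gamma_1$ and $\gamma_2$ (each quantified over the second argument), I would re-associate the premise by \textup{MA2} into $a \mactin \gamma_1 \mact (\gamma_2 \mact \beta_2)$ and apply the hypothesis for $\gamma_1$ with second argument $\gamma_2 \mact \beta_2$. If it returns $a \mactin \gamma_1$, weakening gives $a \mactin \gamma_1 \mact \gamma_2$, the left disjunct. Otherwise it returns $a \mactin \gamma_2 \mact \beta_2$, to which I apply the hypothesis for $\gamma_2$ with second argument $\beta_2$: this gives either $a \mactin \gamma_2$, whence weakening together with \textup{MA1} yields $a \mactin \gamma_1 \mact \gamma_2$, or $a \mactin \beta_2$ directly. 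Either way one disjunct holds, completing the induction.

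The genuinely delicate points are the weakening property and the $c \neq a$ base subcase, both of which hinge on manipulating the leading action correctly. The recurring move is to present the single action $a$ in the padded form $a \mact \silent$ so that \textup{MS3}/\textup{MS4} become applicable, and to discharge the residual subtraction $\silent \mactminus c$ via \textup{MD1}. I expect no obstacle in the induction itself beyond stating each hypothesis with its second multi-action universally quantified, which is exactly what makes the re-association step through \textup{MA2} go through.
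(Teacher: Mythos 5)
Your proof is correct: the recurring padding move $a \proceq a \mact \silent$ does make \textup{MS3}/\textup{MS4} applicable in every case, the three base subcases are handled exactly right (with \textup{MD1} and \textup{MA3} discharging the residue $\silent \mactminus c$), your weakening lemma is a sound consequence of \propref{mactin:beta:decomp:a}{} together with \textup{MS3} and \textup{MS1} and is used without circularity since that proposition precedes this one, and quantifying each induction hypothesis over the second multi-action is precisely what licenses the re-association through \textup{MA2}. Where you diverge from the paper is in how $\beta_1$ is decomposed. The paper's proofs of the multi-action propositions pivot on the normal-form result \propref{mact:nf}{}: rather than splitting $\beta_1$ at its top-level $\mact$ into $\gamma_1 \mact \gamma_2$, one peels a single head action, writing $\beta_1 \proceq b \mact \hat\beta_1$ with $\mactsize(\hat\beta_1) < \mactsize(\beta_1)$, and case-splits on $b = a$ versus $b \neq a$: in the first case \textup{MS3} and \textup{MS1} yield $a \mactin \beta_1$ outright, and in the second \textup{MS4} with \textup{MD1} reduces the premise to $a \mactin \hat\beta_1 \mact \beta_2$, after which the induction hypothesis is applied once and any resulting $a \mactin \hat\beta_1$ is lifted back to $a \mactin b \mact \hat\beta_1$ by the same \textup{MS4} computation. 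That head-peeling organization buys two economies your version pays for separately: it needs no standalone weakening lemma (hence no appeal to \propref{mactin:beta:decomp:a}{}), and it invokes the induction hypothesis once per step rather than twice. Conversely, your argument inducts directly on the grammar and never touches the size measure, so it is somewhat more self-contained; the two routes are equally rigorous, and yours simply trades the normal-form machinery for an extra auxiliary fact.
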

\begin{proposition}
	\label{prop:mactminus:md4:gen}
	$(\gamma \mact \alpha) \mactminus \gamma \proceq \alpha$
\end{proposition}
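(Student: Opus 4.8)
The plan is to prove Proposition~\ref{prop:mactminus:md4:gen} by structural induction on the multi-action $\gamma$, viewing the statement as the generalization of axiom MD4 (which subtracts a single action) to the subtraction of an arbitrary multi-action. Throughout I would keep $\alpha$ universally quantified, so that the induction hypothesis reads: $(\gamma' \mact \alpha) \mactminus \gamma' \proceq \alpha$ for every multi-action $\alpha$ and every immediate subterm $\gamma'$ of $\gamma$. The key enabling observation is that $\proceq$ is a congruence, so the commutativity and associativity axioms MA1 and MA2 (and the unit law MA3) may be applied freely to the left operand of $\mactminus$ in order to expose the action or sub-multi-action that is being subtracted.

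For the base case $\gamma = \A$, I would split on whether $\A = a$ or $\A = \silent$. If $\gamma = a$, the claim $(a \mact \alpha) \mactminus a \proceq \alpha$ is exactly axiom MD4. If $\gamma = \silent$, I would first rewrite $\silent \mact \alpha \proceq \alpha$ using MA1 followed by MA3, and then conclude $\alpha \mactminus \silent \proceq \alpha$ by MD2.

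For the inductive step $\gamma = \gamma_1 \mact \gamma_2$, I would start from $((\gamma_1 \mact \gamma_2) \mact \alpha) \mactminus (\gamma_1 \mact \gamma_2)$ and apply MD3 to peel the subtraction apart, obtaining $(((\gamma_1 \mact \gamma_2) \mact \alpha) \mactminus \gamma_1) \mactminus \gamma_2$. I would then use MA2 to re-associate the left operand as $\gamma_1 \mact (\gamma_2 \mact \alpha)$, so that the induction hypothesis for $\gamma_1$, instantiated with $\gamma_2 \mact \alpha$ in place of $\alpha$, reduces the inner expression to $\gamma_2 \mact \alpha$; this leaves $(\gamma_2 \mact \alpha) \mactminus \gamma_2$, and a final appeal to the induction hypothesis for $\gamma_2$ yields $\alpha$, the desired conclusion.

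I expect no serious obstacle, as the result is a routine generalization of MD4. The only point requiring care is the repeated, implicit use of MA1/MA2 to bring the relevant (sub)action to the head of the left operand before MD3 and MD4 become syntactically applicable, together with the justification that these rewrites are legitimate underneath $\mactminus$ precisely because $\proceq$ is a congruence. If one prefers to avoid quantifying over arbitrary subterms, the same argument repackages as an induction on $\mactsize(\gamma)$ using the normal-form Proposition~\ref{prop:mact:nf}, which always presents $\gamma$ as $a \mact \hat\gamma$ with a strictly smaller tail $\hat\gamma$; the calculational steps are identical.
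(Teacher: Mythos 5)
Your proof is correct and takes essentially the same route as the paper's: structural induction on $\gamma$ with $\alpha$ kept universally quantified, handling $\gamma = a$ by MD4 and $\gamma = \silent$ by MA1/MA3 and MD2, and discharging the step $\gamma = \gamma_1 \mact \gamma_2$ via MD3, re-association by MA2, and two instantiations of the strengthened induction hypothesis. Your explicit remarks on congruence of $\proceq$ under $\mactminus$ and on why $\alpha$ must stay quantified are exactly the points that make the step go through.
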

\begin{proposition}
	\label{prop:mactin:beta:decomp}
	$\alpha \mactin \beta \IMPLIES \beta \proceq \alpha \mact \tilde \beta$
\end{proposition}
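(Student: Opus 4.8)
The plan is to prove the statement by complete (strong) induction on $\mactsize(\alpha)$, using the normal-form property of \propref{mact:nf}{} to organise the cases. This is the natural route, because the claim is exactly the generalisation of \propref{mactin:beta:decomp:a}{} from a single leading action $a$ to an arbitrary multi-action $\alpha$: the idea is to repeatedly ``peel off'' one action at a time from both $\alpha$ and $\beta$ until $\alpha$ is exhausted.

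For the base case $\mactsize(\alpha) = 1$ we have $\alpha = \A \in \set{a , \silent}$. If $\alpha = \silent$, then the hypothesis is immediate by MS1 and we simply take $\tilde\beta = \beta$, since $\silent \mact \beta \proceq \beta$ by MA1 and MA3. If $\alpha = a$, then instantiating \propref{mactin:beta:decomp:a}{} with $\hat\alpha = \silent$ (so that $a \mact \silent \proceq a$ by MA3) turns the hypothesis $a \mactin \beta$ directly into $\beta \proceq a \mact \check\beta$, and we take $\tilde\beta = \check\beta$.

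For the inductive step $\mactsize(\alpha) > 1$, \propref{mact:nf}{} gives $\alpha \proceq a \mact \hat\alpha$ with $\mactsize(\hat\alpha) < \mactsize(\alpha)$. From the hypothesis $\alpha \mactin \beta$ we then have $a \mact \hat\alpha \mactin \beta$, so \propref{mactin:beta:decomp:a}{} yields $\beta \proceq a \mact \check\beta$ for some $\check\beta$. Substituting this decomposition of $\beta$ into $a \mact \hat\alpha \mactin \beta$ and applying MS3 reduces it to $\hat\alpha \mactin \check\beta$. Since $\mactsize(\hat\alpha) < \mactsize(\alpha)$, the induction hypothesis applies and gives $\check\beta \proceq \hat\alpha \mact \tilde\beta$ for some $\tilde\beta$. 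Reassembling, $\beta \proceq a \mact \check\beta \proceq a \mact \hat\alpha \mact \tilde\beta \proceq \alpha \mact \tilde\beta$, which is the desired decomposition with witness $\tilde\beta$.

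The step that needs the most care---though it is not deep---is the bookkeeping between $\proceq$-equality and the $\mactin$ relation: I must rely on $\mactin$ being well defined modulo $\proceq$ in order to rewrite $\alpha$ as $a \mact \hat\alpha$ and $\beta$ as $a \mact \check\beta$ inside the hypothesis before invoking MS3, which demands the \emph{same} leading action on both sides. \propref{mactin:beta:decomp:a}{} is precisely what manufactures that common leading $a$ on the right-hand side, so the two propositions dovetail; the only remaining obligation is to confirm that $\mactsize(\hat\alpha) < \mactsize(\alpha)$ licenses the strong-induction appeal, which \propref{mact:nf}{} guarantees.
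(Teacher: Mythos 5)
Your proof is correct and uses exactly the machinery the paper's proof relies on: strong induction on $\mactsize(\alpha)$ organised by \propref{mact:nf}{}, with \propref{mactin:beta:decomp:a}{} supplying the common leading action and MS3 discharging it so the induction hypothesis applies to the smaller remainder. The bookkeeping you flag---that $\mactin$ respects $\proceq$, so $\alpha$ and $\beta$ may be rewritten inside the hypothesis---is indeed the only delicate point, and it holds because $\proceq$ is an equational congruence for all the multi-action operators.
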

\begin{proposition}
	\label{prop:mactin:alpha:elim}
	$\alpha_1 \mact \alpha_2 \mactin \beta \IMPLIES \alpha_1 \mactin \beta$
\end{proposition}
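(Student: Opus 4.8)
The plan is to reduce this statement about the containment predicate $\mactin$ to the decomposition properties already proved, rather than unfolding the defining axioms MS1--MS4 by hand. Since $\mactin$ is boolean-valued, the implication reads: assuming $\alpha_1 \mact \alpha_2 \mactin \beta \proceq \true$, I must derive $\alpha_1 \mactin \beta \proceq \true$.

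First I would apply \propref{mactin:beta:decomp}{} to the hypothesis $\alpha_1 \mact \alpha_2 \mactin \beta$, obtaining a multi-action $\tilde{\beta}$ with $\beta \proceq \alpha_1 \mact \alpha_2 \mact \tilde{\beta}$. Because $\mactin$ respects $\proceq$, I can replace $\beta$ by this equal term in the goal, turning $\alpha_1 \mactin \beta$ into $\alpha_1 \mactin \alpha_1 \mact (\alpha_2 \mact \tilde{\beta})$ (regrouping with MA2). Rewriting the left-hand side as $\alpha_1 \mact \silent$ via MA3, I would then invoke the generalized cancellation law \propref{mactin:ms3:gen}{} with common prefix $\alpha_1$ to strip $\alpha_1$ from both sides, leaving $\silent \mactin \alpha_2 \mact \tilde{\beta}$, which is $\true$ by MS1. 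This settles the claim in a handful of purely mechanical rewrites.

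As a fallback, the result also admits a direct structural induction on $\alpha_1$ using the normal-form decomposition of \propref{mact:nf}{}. The cases $\alpha_1 = \silent$ and $\alpha_1 = a$ are immediate from MS1 and from \propref{mactin:beta:decomp:a}{} combined with MS3; in the remaining case $\alpha_1 = a \mact \hat{\alpha}_1$ one peels the shared leading action $a$ off both sides (again via \propref{mactin:beta:decomp:a}{} and MS3) and appeals to the induction hypothesis on the strictly smaller $\hat{\alpha}_1$.

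The hard part in a naive, axiom-driven proof would be the case MS4, where the two sides have distinct leading actions and one is forced to track the subtraction operator $\mactminus$ and argue about which copies of an action are removed. Routing the proof through \propref{mactin:beta:decomp}{} and \propref{mactin:ms3:gen}{} is precisely what sidesteps this bookkeeping: those propositions already encapsulate the underlying multiset semantics of $\mactin$, so that only straightforward prefix cancellations remain.
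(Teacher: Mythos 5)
Your primary argument is correct and takes essentially the same route as the paper: decompose $\beta \proceq \alpha_1 \mact \alpha_2 \mact \tilde\beta$ via \propref{mactin:beta:decomp}{}, rewrite $\alpha_1$ as $\alpha_1 \mact \silent$ by MA3, cancel the common prefix with \propref{mactin:ms3:gen}{}, and conclude $\silent \mactin \alpha_2 \mact \tilde\beta \proceq \true$ by MS1. The inductive fallback you sketch is also sound but unnecessary given these propositions.
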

}{}

\ifthenelse{\boolean{isTechReport}}{\input{sect/proofs}}{}

\end{document}